\renewcommand{\section}{\@startsection
  {section}%
  {1}%
  {0mm}%
  {-1\baselineskip}%
  {0.5\baselineskip}%
  {\normalfont\large\bfseries}%
}
\renewcommand{\subsection}{\@startsection
  {subsection}%
  {2}%
  {0mm}%
  {-1\baselineskip}%
  {0.5\baselineskip}%
  {\normalfont\large\itshape}%
}
\renewcommand{\subsubsection}{\@startsection
  {subsubsection}%
  {3}%
  {0mm}%
  {-1\baselineskip}%
  {0.5\baselineskip}%
  {\normalfont\itshape}%
}
\newsavebox{\tempbox}
\renewcommand{\@makecaption}[2]{
  \vspace{10pt}
  \sbox{\tempbox}{\textbf{#1.} #2}
  \ifthenelse{\lengthtest{\wd\tempbox > \linewidth}}{
    \textbf{#1.} #2\par
  }{
    \begin{center}
      \textbf{#1.} #2
    \end{center}
  }
}
\newtheoremstyle{mythm}%
  {}%
  {}%
  {\itshape}%
  {}%
  {\bfseries}%
  {.}%
  {.5em}%
  {\thmname{#1}~\thmnumber{#2}\ifthenelse{\equal{\thmnote{#3}}{}}{}{~(\thmnote{#3})}}%
\newtheoremstyle{mydefn}%
  {}%
  {}%
  {\upshape}%
  {}%
  {\bfseries}%
  {.}%
  {.5em}%
  {\thmname{#1}~\thmnumber{#2}\ifthenelse{\equal{\thmnote{#3}}{}}{}{~(\thmnote{#3})}}%
\newtheoremstyle{myremark}%
  {}%
  {}%
  {\upshape}%
  {}%
  {\itshape}%
  {.}%
  {.5em}%
  {\thmname{#1}~\thmnumber{#2}\ifthenelse{\equal{\thmnote{#3}}{}}{}{~(\thmnote{#3})}}%
\theoremstyle{mythm} \newtheorem{theorem}{Theorem}[section]
\newtheorem{lemma}[theorem]{Lemma}
\newtheorem{corollary}[theorem]{Corollary}
 \theoremstyle{mydefn}
\newtheorem{example}[theorem]{Example} \theoremstyle{myremark}
\newtheorem{remark}[theorem]{Remark} \theoremstyle{mythm}
\newcounter{claimcounter}
\newenvironment{claim}[1][]{
  \renewcommand{\proof}{\smallskip\par\noindent\textit{Proof. }}
  \medskip\par\noindent%
  \ifthenelse{\equal{#1}{}}{%
    \setcounter{claimcounter}{0}\refstepcounter{claimcounter}\textit{Claim~\arabic{claimcounter}.}
  }{%
    \ifthenelse{\equal{#1}{resume}}{%
      \refstepcounter{claimcounter}\textit{Claim~\arabic{claimcounter}.}
    }{%
      \textit{Claim~#1.}
    }
  }
}{
  \par\medskip
}
\newcommand{\uenda}{\tag*{$\lrcorner$}}
\numberwithin{equation}{section}
\newcommand{\Q}{\mathsf{Q}}
\newcommand{\FF}{\mathbb F}
\newcommand{\Sol}{\alpha}
\newlist{caselist}{description}{10}
\setlist[caselist]{font=\itshape\mdseries}
\newlist{eroman}{enumerate}{2}
\setlist[eroman,1]{label=(\roman*)}
\setlist[eroman,2]{label=(\alph*)}
\newlist{ealph}{enumerate}{1}
\setlist[ealph]{label=(\Alph*)}
\newcounter{nlistcounter}
\renewcommand{\phi}{\varphi}
\newcommand{\bigmid}{\;\big|\;}
\newcommand{\Bigmid}{\;\Big|\;}
\renewcommand{\mathbf}[1]{\textit{\bfseries #1}}
\definecolor{gruen}{rgb}{0,0.6,0.2}
\newcounter{rbcounter}
\renewcommand{\tilde}[1]{\widetilde{#1}}
\newcommand{\CC}{{\mathcal C}}
\newcommand{\CY}{{\mathcal Y}}
\newcommand{\CX}{{\mathcal X}}
\newcommand{\TS}{{\mathcal{TS}}}
\newcommand{\bzeta}{{\boldsymbol\zeta}}
\newcommand{\bseta}{{\boldsymbol\eta}}
\newcommand{\SP}{{\mathsf P}}
\newcommand{\SQ}{{\mathsf Q}}
\newcommand{\SR}{{\mathsf R}}
\newcommand{\Piso}{{\mathsf P}_{\textup{iso}}}
\newcommand{\GIeq}[1]{\Piso(#1)}
\newcommand{\PTs}{{\mathsf P}_{\textup{Ts}}}
\newcommand{\tp}{\operatorname{tp}}
\begin{document} 
\title{Limitations of Algebraic Approaches to Graph Isomorphism Testing}

\author{Christoph Berkholz and 
Martin Grohe\\\normalsize RWTH Aachen University\\\normalsize\{grohe,berkholz\}@informatik.rwth-aachen.de}
\date{}

\maketitle

\begin{abstract}
  We investigate the power of graph isomorphism algorithms based on algebraic
  reasoning techniques like Gr\"obner basis computation. The idea of
  these algorithms is to encode two graphs into a system of equations
  that are satisfiable if and only if if the graphs are isomorphic,
  and then to (try to) decide satisfiability of the system using, for
  example, the Gröbner basis algorithm.  In some cases this can be
  done in polynomial time, in particular, if the equations admit a
  bounded degree refutation in an algebraic proof systems such as
  Nullstellensatz or polynomial calculus. We prove linear lower bounds
  on the polynomial calculus degree over all fields of characteristic
  $\neq2$ and also linear lower bounds for the degree of
  Positivstellensatz calculus derivations.

  We compare this approach to recently studied linear and semidefinite
  programming approaches to isomorphism testing, which are known to be
  related to
  the combinatorial Weisfeiler-Lehman algorithm. We exactly
  characterise the power of the Weisfeiler-Lehman algorithm 
  in terms of an algebraic proof system that lies between
  degree-$k$ Nullstellensatz and degree-$k$ polynomial calculus.
\end{abstract}

\section{Introduction}
The graph isomorphism problem (GI) is notorious for its unresolved
complexity status. While there are good reasons to believe that GI is
not NP-complete, it is wide open whether it is in polynomial time.

Complementing recent research on linear and semidefinite programming
approaches to GI
\cite{atsman13,codschsno14,Grohe.2012,mal14,DWWZ.2013}, we investigate
the power of GI-algorithms based on algebraic reasoning techniques
like Gr\"obner basis computation. The idea of all these approaches is
to encode isomorphisms between two graphs as solutions to a system of
equations and possibly inequalities and then try to solve this system
or relaxations of it. Most previous work is based on the following
encoding: let $G,H$ be graphs with adjacency matrices $A,B$,
respectively. Note that $G$ and $H$ are isomorphic if and only if
there is a permutation matrix $X$ such that $AX=XB$.  If we view the
entries $x_{vw}$ of the matrix $X$ as variables, we obtain a system of
linear equations. We introduce equations forcing all row- and column sums of
$X$ to
be $1$ and add the inequalities $x_{vw}\ge0$. It follows that the integer solutions to
this system are 0/1-solutions that correspond to isomorphisms between $G$ and $H$. Of course
this does not help to solve GI, because we cannot find integer
solutions to a system of linear inequalities in polynomial time. The
first question to ask is what happens if we drop the integrality
constraints. Almost thirty years ago, Tinhofer~\cite{tin86} proved
that the system has a rational (or, equivalently, real) solution if
and only if the so-called colour refinement algorithm does not
distinguish the two graphs. \emph{Colour refinement} is a simple
combinatorial algorithm that iteratively colours the vertices of a
graph according to their ``iterated degree sequences'', and, to
distinguish two graphs, tries to detect a difference in their colour
patterns.
For every $k$, there is a natural generalisation of the colour
refinement algorithm that colours $k$-tuples of vertices instead of
single vertices; this generalisation is known as the
\emph{$k$-dimensional Weisfeiler-Lehman algorithm ($k$-WL)}. Atserias
and Maneva~\cite{atsman13} and independently Malkin~\cite{mal14}
proved that the Weisfeiler-Lehman algorithm is closely tied to
the \emph{Sherali-Adams} hierarchy \cite{Sherali.1990} of increasingly tighter
LP-relaxations of the integer linear program for GI described above:
the distinguishing power of $k$-WL is between that of the $(k-1)$st
and $k$th level of the Sherali-Adams hierarchy.
Otto and the second author of this paper~\cite{Grohe.2012} gave a
precise correspondence between $k$-WL and the nonnegative solutions to
a system of linear equations between the $(k-1)$st and $k$th level of
the Sherali-Adams hierarchy. Already in 1992, Cai, F{\"u}rer, and
Immerman~\cite{caifurimm92} had proved that for every $k$ there are
non-isomorphic graphs $G_k,H_k$ (called \emph{CFI-graphs} in the
following) of size $O(k)$ that are not distinguished by $k$-WL, and
combined with the results of Atserias-Maneva and Malkin, this implies
that no sublinear level of the Sherali-Adams hierarchy suffices to
decide isomorphism.  O'Donnell, Wright, Wu, and Zhou~\cite{DWWZ.2013}
and Codenotti, Schoenbeck, and Snook \cite{codschsno14} studied the
Lasserre hierarchy \cite{Lasserre.2001} of semi-definite relaxations
of the integer linear program for GI. They proved that the same
CFI-graphs cannot even be distinguished by sublinear levels of the
Lasserre hierarchy.
 
However, there is a different way of relaxing the integer linear
program to obtain a system that can be solved in polynomial time: we
can drop the nonnegativity constraints, which are the only
inequalities in the system. Then we end up with a system of linear
equalities, and we can ask whether it is solvable over some finite
field or over the integers. As this can be decided in polynomial time,
it gives us a new polynomial time algorithm for graph isomorphism: we
solve the system of equations associated with the given graphs. If
there is no solution, then the graphs are nonisomorphic. (We say that
the system of equations \emph{distinguishes} the graphs.) If there is a
solution, though, we do not know if the graphs are isomorphic or
not. Hence the algorithm is ``sound'', but not necessarily
``complete''. Actually, it is not obvious that the algorithm is not
complete. If we interpret the linear equations over $\mathbb F_2$ or
over the integers, the system does distinguish the CFI-graphs (which
is not very surprising because these graphs encode systems of linear
equations over $\mathbb F_2$). Thus the lower bound techniques applied
in all previous results do not apply here. However, we construct
nonisomorphic graphs that cannot be distinguished by this system (see
Theorem~\ref{theo:int}).

In the same way, we can drop the nonnegativity constraints from the
levels of the Sherali-Adams hierarchy and then study solvability over
finite fields or over the integers, which gives us increasingly
stronger systems. Even more powerful algorithms can be obtained by
applying algebraic techniques based on Gr\"obner basis computations to
these systems. Proof complexity gives us a good framework for proving
lower bounds for such algorithms. There are algebraic proof systems
such as the polynomial calculus \cite{Clegg.1996} and the weaker
Nullstellensatz system \cite{Beame.1994} that characterise the power these
algorithms. 
The degree of refutations in the algebraic systems roughly
corresponds to the levels of the Sherali-Adams and Lasserre hierarchies
for linear and semi-definite programming,
and to the dimension of the Weisfeiler-Lehman algorithm. We identify
a fragment of the polynomial calculus, called the monomial polynomial
calculus, such that degree-$k$ refutations in this system precisely
characterise distinguishability by $k$-WL (see
Theorem~\ref{thm:monomial}).

As our main lower bounds, we prove that for every field $\FF$ of
characteristic $\neq 2$, there is a family of nonisomorphic graphs
$G_k,H_k$ of size $O(k)$ that cannot be distinguished by the
polynomial calculus in degree $k$. 
Furthermore, we prove that there is a family of nonisomorphic graphs
$G_k,H_k$ of size $O(k)$ that cannot be distinguished by the
Positivstellensatz calculus in degree $k$. 
The Positivstellensatz calculus \cite{Grigoriev.2001} is an extension of the polynomial calculus over the reals and subsumes semi-definite programming hierarchies.
Thus, our results slightly generalise
the results of O'Donnell et al.~\cite{DWWZ.2013} on the Lasserre
hierarchy (described above). Technically, our contribution is a
low-degree reduction from systems
of equations describing so-called Tseitin tautologies to the systems for graph isomorphism. Then we apply
known lower bounds \cite{Buss.1998,Grigoriev.2001} for Tseitin tautologies.

\section{Algebraic Proof systems}

\emph{Polynomial calculus (PC)} is a proof system to prove that a given system of (multivariate)
polynomial equations $\SP$  over a field $\FF$ has no
0/1-solution. We always normalise polynomial equations to the form
$p=0$ and just write $p$ to denote the equation $p=0$.
The derivation rules are the following (for polynomial equations $p \in\SP$, polynomials $f, g$, variables $x$ and field elements $a,b$):
$$
\frac{}{p}, \quad\frac{}{x^2-x}, \quad\frac{f}{xf}, \quad\frac{g\quad f}{ag+bf}.
$$
The \emph{axioms} of the systems are all $p\in\SP$ and
$x^2-x$ for all variables $x$.
A PC \emph{refutation} of $\SP$ is a derivation of $1$.
The polynomial calculus is sound and complete, that is, $\SP$
has a PC refutation if and only if it is unsatisfiable.
The \emph{degree} of a PC derivation is the maximal degree of every polynomial in the derivation. 
Originally, Clegg et. al. \cite{Clegg.1996} introduced the polynomial calculus to model Gröbner basis computation.
Moreover, using the Gröbner basis algorithm, it can be decided in time $n^{O(d)}$ whether a given system of polynomial equations has a PC refutation of degree $d$ (see \cite{Clegg.1996}).

We introduce the following restricted variant of the polynomial calculus. 
A \emph{monomial-PC} derivation is a PC-derivation where we require that the polynomial $f$ in the multiplication rule $\frac{f}{xf}$ is either a monomial or the product of a monomial and an axiom.

If we restrict the application of the multiplication rule even further
and require $f$ to be the product of a monomial and an axiom, we obtain the Nullstellensatz proof system \cite{Beame.1994}. 
This proof system is usually stated in the following static form.
A \emph{Nullstellensatz} refutation of a system $\SP$ 
of polynomial
equations consists of polynomials $f_p$, for $p\in\SP$, and
$g_x$, for all variables $x$, such that
\[
\sum_{p\in\SP}f_pp+\sum_{x}g_x(x^2-x)=1.
\]
The degree of a Nullstellensatz refutation is the maximum degree of
all polynomials $f_pp$.
\subsection{Low-Degree Reductions}
To compare the power of the polynomial calculus for different systems
of polynomial equations, we use \emph{low degree reductions} \cite{Buss.2001}.
Let $\SP$ and $\SR$ be two sets of polynomials in the variables
$\CX$ and $\CY$, respectively. A
\emph{degree-$(d_1,d_2)$ reduction} from $\SP$ to $\SR$ consist of the following:
\begin{itemize}
\item for each variable $y\in\CY$ a polynomial $f_y(x_1,\ldots,x_k)$
  of degree at most $d_1$ in variables $x_1,\ldots,x_k\in\CX$;
\item for each polynomial $r(y_1,\ldots,y_\ell)\in\SR$ a
  degree-$d_2$ PC derivation of 
  \[
  r\big(f_{y_1}(x_{11},\ldots,x_{1k_1}),\ldots,f_{y_\ell}(x_{\ell
    1},\ldots,x_{\ell k_\ell})\big)
  \]
  from $\SP$.
\item for each variable $y\in\CY$ a degree-$d_2$ PC derivation of 
  \[
  f_{y}(x_{1},\ldots,x_{k})^2-f_y(x_{1},\ldots,x_{k})
  \]
  from $\SP$.
\end{itemize}

\begin{lemma}[\cite{Buss.2001}]\label{lem:lowdeg}
If there is a degree-$(d_1,d_2)$ reduction from $\SP$ to $\SR$ and $\SR$ has a polynomial calculus refutation of degree $k$, then $\SP$ has a polynomial calculus refutation of degree $\max(d_2,kd_1)$. 
\end{lemma}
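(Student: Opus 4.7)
The plan is to take a degree-$k$ PC refutation $\pi$ of $\SR$ and transform it into a PC refutation of $\SP$ by substituting each variable $y_j$ occurring in $\pi$ with the polynomial $f_{y_j}(x_1,\ldots,x_{k_j})$ supplied by the reduction. Applied to a polynomial of degree $d$, this substitution yields a polynomial of degree at most $d\cdot d_1$, so every line of the transformed proof will have degree at most $k d_1$. Writing $\hat q$ for the substitution of a polynomial $q$, the goal is to show by induction on the length of $\pi$ that for every polynomial $q$ appearing in $\pi$ there is a PC derivation of $\hat q$ from $\SP$ of degree at most $\max(d_2,kd_1)$; applied to the last line $q=1$, this gives the desired refutation of $\SP$.

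The induction splits into cases according to which rule produced $q$. For an axiom $r\in\SR$, a degree-$d_2$ derivation of $\hat r$ from $\SP$ is provided directly by the second bullet of the reduction. For a Boolean axiom $y^2-y$, a degree-$d_2$ derivation of $f_y^2-f_y$ is provided by the third bullet. For a linear-combination step $q=ag+bf$, we combine the inductively obtained derivations of $\hat g$ and $\hat f$ without increasing degree to obtain $\hat q = a\hat g + b\hat f$.

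The interesting case is the multiplication rule $f\mapsto y_jf$, where we need to simulate multiplication of $\hat f$ by the polynomial $f_{y_j}$, whereas PC only permits multiplying by a single variable. I would handle this by writing $f_{y_j}=\sum_i a_i m_i$ as a sum of monomials of degree at most $d_1$ and, for each monomial $m_i=x_{i_1}\cdots x_{i_s}$ with $s\les d_1$, deriving $m_i\hat f$ from $\hat f$ by applying the variable-multiplication rule $s$ times; a final linear combination then produces $f_{y_j}\hat f = \widehat{y_jf}$. The main obstacle is the degree bookkeeping for this simulation, but it works out: the step in $\pi$ has degree at most $k$, so $\deg(f)\les k-1$ and $\deg(\hat f)\les (k-1)d_1$, so every intermediate polynomial $x_{i_1}\cdots x_{i_t}\hat f$ has degree at most $d_1+(k-1)d_1 = kd_1$.

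Taking the maximum over all case contributions ($d_2$ from the axiom derivations, $kd_1$ from the simulated multiplications, and no increase from linear combinations) yields the bound $\max(d_2,kd_1)$ claimed in the lemma.
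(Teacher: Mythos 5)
Your proposal is correct. The paper cites this lemma from Buss et al.\ without reproducing a proof, and your argument---substituting $f_y$ for $y$ throughout the degree-$k$ refutation of $\SR$, invoking the supplied degree-$d_2$ derivations for the axioms, and simulating multiplication by $f_{y_j}$ monomial by monomial, with the key observation that the premise $f$ of a multiplication step has degree at most $k-1$ so that all intermediate lines stay within degree $kd_1$---is exactly the standard simulation proof of this result.
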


\subsection{Linearisation}

For a system of polynomial equations $\SP$ over variables $x_i$
let $\SP^{r}$ be the set of all polynomial equations of degree at most $r$ obtained
 by multiplying a polynomial in $\SP$ by
a monomial over the variables $x_i$.  Furthermore, for a
system of polynomial equations $\SP$ let
$\operatorname{MLIN}(\SP)$ be the the multi-linearisation of $\SP$
obtained by replacing every monomial $x_{i_1}\cdots x_{i_\ell}$ by a
variable $X_{\{i_1,\ldots,i_\ell\}}$. 
Observe that if
$\SP\cup \Q$ has a solution $\Sol$, then so does
$\operatorname{MLIN}(\SP)$ as we can set $\Sol(X_{\{i_1,\ldots,i_\ell\}}):=\Sol(x_{i_1})\cdots \Sol(x_{i_\ell})$.
The converse however does not hold since a solution $\Sol$
for $\operatorname{MLIN}(\SP)$ does not have to satisfy
$\Sol(X_{\{ab\}})=\Sol(X_{\{a\}})\Sol(X_{\{b\}})$.
The next lemma states a well-known connection between Nullstellensatz and Linear Algebra.
\begin{lemma}[\cite{Buss.1998}]\label{lem:NullPCLin}
  Let $\SP$ be a system of polynomial equations. The following
  statements are equivalent.
  \begin{eroman}
    \item $\SP$ has a degree $r$ Nullstellensatz refutation.
    \item The system of linear equations $\operatorname{MLIN}(\SP^r)$ has no solution.
  \end{eroman}
\end{lemma}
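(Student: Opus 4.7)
The plan is to prove the two directions of the equivalence by a clean translation between polynomial identities modulo the Boolean ideal and linear identities in the multilinearised variables $X_S$. The key observation that ties the two pictures together is that $\operatorname{MLIN}(x_i^2-x_i)=X_{\{i\}}-X_{\{i\}}=0$, and more generally $\operatorname{MLIN}(g_x\cdot(x_i^2-x_i))=0$ for any polynomial $g_x$, since each monomial $m\cdot x_i^2$ and the corresponding $m\cdot x_i$ share the same underlying set of variables.

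For the direction (i)$\Rightarrow$(ii), I start from a Nullstellensatz refutation $\sum_{p\in\SP}f_pp+\sum_x g_x(x^2-x)=1$ of degree at most $r$. Expanding each $f_p$ as $\sum_m c_{p,m}m$, each term $c_{p,m}\,mp$ is a scalar multiple of an element of $\SP^r$ (because $\deg(mp)\le r$). Applying $\operatorname{MLIN}$ to both sides, the Boolean-axiom contributions vanish by the observation above, and the left-hand side becomes an $\FF$-linear combination of elements of $\operatorname{MLIN}(\SP^r)$ that equals the constant $1$. Hence $1$ lies in the $\FF$-span of $\operatorname{MLIN}(\SP^r)$, which is precisely the condition that the linear system $\operatorname{MLIN}(\SP^r)$ is infeasible.

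For (ii)$\Rightarrow$(i), infeasibility of $\operatorname{MLIN}(\SP^r)$ gives scalars $c_j\in\FF$ and elements $q_j=m_jp_j\in\SP^r$ with $\sum_j c_j\operatorname{MLIN}(q_j)=1$. Lift to $P:=\sum_j c_jq_j$, which is a polynomial in the $x_i$ of degree at most $r$ whose multilinearisation equals $1$. I then need to convert $P-1$ into an explicit combination of Boolean axioms with controlled degrees. For this, I apply the rewriting $x_i^2\mapsto x_i$ to $P$ in some fixed order: each replacement at a monomial of degree $d$ takes the form $m=m'+h\cdot(x_i^2-x_i)$ with $\deg(h)=d-2$, so all intermediate polynomials have degree at most $r$ and the collected multipliers $g_x$ of the axioms satisfy $\deg(g_x(x^2-x))\le r$. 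The normal form $\tilde P$ is multilinear, and since multilinearisation is the identity on multilinear polynomials and commutes with the reduction (the reduction only subtracts multiples of $x_i^2-x_i$, which are $\operatorname{MLIN}$-zero), we get $\tilde P=\operatorname{MLIN}(P)=1$. Therefore $P-1=\sum_x g_x(x^2-x)$, and regrouping $\sum_j c_j m_jp_j$ by $p_j$ into $f_p:=\sum_{j:p_j=p}c_jm_j$ yields the degree-$r$ Nullstellensatz refutation $\sum_p f_pp - \sum_x g_x(x^2-x)=1$.

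The only real obstacle is the degree bookkeeping in the lifting step: one must confirm that the multi-step reduction modulo the Boolean axioms never inflates the degree beyond $r$. This follows because each single reduction replaces $x_i^2$ inside a degree-$d$ monomial using a degree-$(d-2)$ multiplier of $x_i^2-x_i$, and the overall $f_p$ inherit the bound $\deg(f_pp)\le r$ from the definition of $\SP^r$. Everything else is essentially bookkeeping: expansions into monomials, grouping, and the fact that $\operatorname{MLIN}$ is an $\FF$-linear map whose kernel is exactly the ideal generated by the Boolean axioms.
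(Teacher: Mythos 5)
The paper does not prove this lemma---it is cited to Buss (1998) without proof---so there is nothing to compare against; judged on its own, your argument is correct and is the standard one. Both directions check out: the observation that $\operatorname{MLIN}$ annihilates every multiple of a Boolean axiom is exactly what is needed in the forward direction (note it also covers the fact that the paper's degree convention only bounds $\deg(f_pp)$, not $\deg(g_x(x^2-x))$, so the $g_x$ may a priori have large degree---your argument is insensitive to this); and the term-by-term reduction $m=m'+h\,(x_i^2-x_i)$ with $\deg h=\deg m-2$ correctly controls the degree in the lifting direction. Two small points you should make explicit if you write this up: (a) in (i)$\Rightarrow$(ii) you need each monomial $m$ of $f_p$ to satisfy $\deg(mp)\le r$, which follows because over a field $\deg(f_pp)=\deg(f_p)+\deg(p)$; and (b) the equivalence ``the affine system is infeasible iff the constant $1$ lies in the span of its left-hand sides'' is the linear-algebra duality (Gaussian elimination/Fredholm alternative over a field) and deserves a sentence, together with the convention that constants are mapped to themselves under $\operatorname{MLIN}$ (equivalently $X_\emptyset=1$), without which the system would be trivially satisfiable by zero.
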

This characterisation of Nullstellensatz proofs in terms of a linear system of equations (also called \emph{design} \cite{Buss.1998}) is a useful tool for proving lower bounds on the Nullstellensatz degree.
Unfortunately, a similar characterisation for bounded degree PC is not in sight.
However, for the newly introduced system monomial-PC, which lies between Nullstellensatz and PC, we have a similar criterion for the non-existence of refutations.

\begin{lemma}\label{lem:monomialPC_MLIN}
  If $\operatorname{MLIN}(\SP^d)$ has a solution $\Sol$ that additionally satisfies 
  $$
  \Sol(X_{\pi})=0 \Longrightarrow \Sol(X_{\rho})=0, \text{ for all } \pi\subseteq \rho,
  $$
  then $\SP$ has no degree-$d$ monomial-PC derivation. 
\end{lemma}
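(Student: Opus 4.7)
The plan is to prove the contrapositive by induction on the length of a monomial-PC derivation, reading the conclusion as ``$\SP$ has no degree-$d$ monomial-PC refutation'' (i.e., no derivation of $1$). The inductive invariant will be: every polynomial $q$ occurring in a degree-$d$ monomial-PC derivation satisfies $\Sol(\operatorname{MLIN}(q))=0$. Applied to the last line $1$ of a hypothetical refutation this yields $1=\Sol(1)=0$, a contradiction.

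For the base cases, each axiom $p\in\SP$ has degree at most $d$, hence $p=1\cdot p\in\SP^{d}$ and $\operatorname{MLIN}(p)\in\operatorname{MLIN}(\SP^{d})$ is satisfied by assumption. The Boolean axiom $x^{2}-x$ linearises to $X_{\{x\}}-X_{\{x\}}=0$, hence is trivially $\Sol$-zero. The linear combination rule $\frac{g\; f}{ag+bf}$ is immediate from the linearity of $\operatorname{MLIN}$: if $\Sol(\operatorname{MLIN}(f))=\Sol(\operatorname{MLIN}(g))=0$, then $\Sol(\operatorname{MLIN}(ag+bf))=0$.

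The interesting step is the multiplication rule $\frac{f}{xf}$, in which the definition of monomial-PC restricts $f$ to two syntactic shapes. If $f=m$ is a bare monomial, then by the inductive hypothesis $\Sol(X_{\operatorname{supp}(m)})=0$; since $\operatorname{supp}(m)\subseteq\operatorname{supp}(xm)$, the monotonicity hypothesis on $\Sol$ gives $\Sol(\operatorname{MLIN}(xf))=\Sol(X_{\operatorname{supp}(xm)})=0$. If instead $f=m\cdot p$ with $p\in\SP$, then $xf=(xm)\cdot p\in\SP^{d}$ (the derivation has degree at most $d$), so $\Sol(\operatorname{MLIN}(xf))=0$ follows directly from $\Sol$ solving $\operatorname{MLIN}(\SP^{d})$.

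The main point, and the only one really requiring care, is to match the two syntactic shapes of $f$ permitted by monomial-PC against the two hypotheses on $\Sol$: polynomials of the form ``monomial times axiom'' are handled by $\operatorname{MLIN}(\SP^{d})$-satisfaction, while the ``bare monomial'' case, in which $m$ itself need not lie in $\SP^{d}$, is precisely where the monotonicity condition $\Sol(X_{\pi})=0\Rightarrow\Sol(X_{\rho})=0$ for $\pi\subseteq\rho$ is indispensable. Dropping either hypothesis would leave one of these two sub-cases of the multiplication rule unjustified.
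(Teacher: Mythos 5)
Your proof is correct and follows essentially the same route as the paper's: the paper's evaluation function $s$ is exactly your $\Sol\circ\operatorname{MLIN}$, and the induction over the derivation with the same two-way case split on the multiplication rule (monomial-times-axiom handled by $\operatorname{MLIN}(\SP^d)$-satisfaction, bare monomial handled by the monotonicity condition) is identical. Your reading of ``derivation'' as ``refutation'' in the statement also matches the paper's intent.
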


\begin{proof}
  Let $\Sol$ be the assignment defined in the lemma and suppose for contradiction, that $\SP$ has a monomial-PC refutation of degree $d$.
  We define $s$ to be an evaluation function that maps that maps polynomials $h$ of degree at most $\leq d$ to elements in $\FF$.
  For field elements $a\in\FF$, we let $s(a):=a$.
  If $h=x_{i_1}\cdots x_{i_\ell}$ is a monomial we set
  $s(x_{i_1}\cdots x_{i_\ell}) := \Sol(X_{\{i_1,\ldots,i_\ell\}})$.
  If $h=\sum_j a_j\vec x_j$ is a polynomial we let $s(h):=\sum_j a_js(\vec x_j)$.  
  We now claim that $s(h)=0$ for every polynomial $h$ in the refutation. 
  This leads to a contradiction, as we finally derive $h=1$ and $s(1)=1$ by definition.
  We prove the claim by induction on the refutation.
  For the base case let $h$ be an axiom. 
  If $h=\sum_j a_j\vec x_j \in \SP$, let $\sum_j a_jX_j$ be the multi-linearisation of $h$.
  We have $s(h)=\sum_j a_js(\vec x_j)=\sum_j a_j\Sol(X_j)=0$, as $\Sol$ is a solution to the linearised equation.
  Furthermore, for axioms $x_i^2-x_i$ we have $s(x_i^2-x_i)=\Sol(X_{\{i\}})-\Sol(X_{\{i\}})=0$.
  The induction step for $\frac{g\quad f}{ag+bf}$ follows immediately as $s(ag+bf)=as(g)+bs(f)$.
  For the multiplication rule $\frac{f}{x_if}$ of monomial-PC there are two cases.
  First, if $f$ is the product of a monomial and an axiom, then $x_if\in \SP^d$.
  Hence, the linearisation of $x_if$ is in $\operatorname{MLIN}(\SP^d)$, and thus $s(x_if)=0$ as in the base case.
  If $f=x_{i_1}\cdots x_{i_\ell}$ is a monomial, then $\Sol(X_{\{i_1,\ldots,i_\ell\}})=s(f)=0$ by induction hypothesis. 
  By the additional requirement on the $\Sol$, it follows that $s(x_if)=\Sol(X_{\{i,i_1,\ldots,i_\ell\}})=0$.
  This finishes the proof of the lemma. 
\end{proof}

\subsection{Linear and Semidefinite-Programming Approaches}

In the previous section we have seen that degree-$d$ Nullstellensatz corresponds to solving a system of \emph{linear equations} of size $n^{O(d)}$, which can be done in time $n^{O(d)}$.
Over the reals, this approach can be strengthened by considering hierarchies of relaxations for linear and semi-definite programming.

In this setting one additionally adds linear inequalities, typically $0\leq x\leq 1$. 
In the same way as for the Nullstellensatz, one lifts this problem to higher dimensions, by multiplying the inequalities and equations with all possible monomials of bounded degree.
Afterwards, one linearises this system as above to obtain a system of \emph{linear inequalities} of size $n^{O(d)}$, which can also be solved in polynomial time using linear programming techniques.
This lift-and-project technique is called Sherali-Adams relaxation  of level $d$ \cite{Sherali.1990}.

Another even stronger relaxation is based on semidefinite programming techniques.
This techniques has different names: Positivstellensatz, Sum-of-Squares (SOS), or Lasserre Hierarchy.
Here we take the view point as a proof system, which was introduced by Grigoriev and  Vorobjov \cite{Grigoriev.2001} and directly extends the Nullstellensatz over the reals.
A degree-$d$ \emph{Positivstellensatz} refutation of a system
$\SP$ of polynomial equations consists of polynomials $f_p$, for $p\in\SP$, and
$g_x$, for all variables $x$, and in addition
polynomials $h_i$ such that
\[
\sum_{p\in\SP}f_pp+\sum_{x}g_x(x^2-x)=1+\sum_ih_i^2.
\]
The degree of a Positivstellensatz refutation is the maximum degree of all polynomials $f_pp$ and $h^2_i$.
It is important to note that Positivstellensatz refutations can be found in time $n^{O(d)}$ using semi-definite programming.
This has been independently observed by Parrilo \cite{Parrilo.2000} in the context of algebraic geometry and by Lasserre \cite{Lasserre.2001} in the context of linear optimisation.

Grigoriev and  Vorobjov \cite{Grigoriev.2001} also introduced a proof system called Positivstellensatz calculus, which extends polynomial calculus in the same way as Positivstellensatz extends Nullstellensatz.
A \emph{Positivstellensatz calculus} refutation of a system of polynomials $\SP$ is a polynomial calculus derivation over the reals of $1+\sum_ih_i^2$.
Again, the degree of such a refutation is the maximum degree of every polynomial in the derivation.

\section{Equations for Graph Isomorphism}

We find it convenient to encode isomorphism using different
equations than those from the system $AX=XB$ described in the
introduction. However, the equations $AX=XB$ can easily be derived in
our system (see Example~\ref{exa:derive_AX-XB}), and thus lower bounds for your system imply lower bounds
for the $AX=XB$-system.

Throughout this section, we fix graphs $G$ and $H$, possibly with
coloured vertices and/or edges. Isomorphisms between coloured graphs
are required to preserve the colours.
We assume that either $|V(G)|\ge 2$ or
$|V(H)|\ge 2$.  
We shall define a system
$\Piso(G,H)$ of polynomial equations that has a solution if and
only if $G$ and $H$ are isomorphic.  The equations are defined over
variables $x_{vw}, v\in V(G),w\in V(H)$. A solution to the system is
intended to describe an isomorphism $\iota$ from $G$ to $H$, where
$x_{vw}\mapsto 1$ if $\iota(v)=w$ and $x_{vw}\mapsto0$ otherwise.
The system $\GIeq{G,H}$ consists of the following linear and quadratic equations:
\begin{align}
  \sum_{v\in V(G)} x_{vw} - 1 &= 0 & &\text{for all }w \in V(H) \label{eq:cont1}\\
  \sum_{w\in V(H)} x_{vw} - 1 &= 0 & &\text{for all }v \in V(G) \label{eq:cont2} \\
x_{vw}x_{v'w'} &= 0 & &\parbox[t]{6cm}{for all $v,v'\in V(G),w,w'\in
                        V(H)$ such that $\{(v,w),(v',w')\}$ is no local isomorphism.}\label{eq:comp}
\end{align}
A \emph{local isomorphism} from $G$ to $H$ is an injective mapping $\pi$
with domain in $V(G)$ and range in $V(H)$ (often viewed as a subset of
$V(G)\times V(H)$) that preserves adjacencies,
that is $vw\in E(G)\iff \pi(v)\pi(w)\in E(H)$. If $G$ and $H$ are coloured
graphs, local isomorphisms are also required to preserve
colours. 

To enforce 0/1-assignments we add the following set $\Q$ of quadratic equalities
\begin{align}
x_{vw}^2 - x_{vw} &= 0& &\text{for all }v\in V(G),w\in V(H). \label{eq:quad}
\end{align}
 We treat these equations separately because they are axioms of the
 polynomial calculus anyway.
Observe that the equations \eqref{eq:cont1} and \eqref{eq:cont2} in
combination with \eqref{eq:quad} make
sure that every solution to the system describes a bijective mapping
from $V(G)$ to $V(H)$. The equations \eqref{eq:comp} make sure that this
bijection is an isomorphism.
Thus, for every field $\FF$, the system $\GIeq{G,H} \cup \Q$ has a solution over $\FF$ if and only $G$ and $H$ are isomorphic.

The following example shows how to derive the equations from the
system based on $AX=XB$ from $\Piso(G,H)$.

\begin{example}\label{exa:derive_AX-XB}
  Recall that $A$ and $B$ denote the adjacency matrices of the graphs
  $A,B$. Thus the equations from $AX=XB$ are
  \begin{equation}\label{eq:AX=XB}
  \sum_{v'\in N(v)}X_{v'w}-\sum_{w'\in N(w)}X_{vw'}=0
  \end{equation}
  for all $v\in V(G),w\in V(H)$. To derive \eqref{eq:AX=XB} from
  $\Piso(G,H)$, we multiply \eqref{eq:cont2} with $-X_{v'w}$ for
  $v'\in N(v)$ and obtain $x_{v'w}-\sum_{w'}x_{v'w}x_{vw'}=0$. Adding
  equations \eqref{eq:comp}, 
  $x_{v'w}x_{vw'}=0$, for all $w'\not\in N(w)$, yields the equation 
  $x_{v'w}-\sum_{w'\in N(w)}x_{v'w}x_{vw'}=0$. Adding these equations
  for all $v'\in N(v)$, we get
  \begin{equation}
    \label{eq:AX=XB-2}
    \sum_{v'\in N(v)}x_{v'w}-\sum_{v'\in N(v)}\sum_{w'\in N(w)}x_{v'w}x_{vw'}=0.
  \end{equation}
  Similarly, we can derive the equation
  \begin{equation}
    \label{eq:AX=XB-3}
    \sum_{w'\in N(w)}x_{vw'}-\sum_{w'\in N(w)}\sum_{v'\in N(v)}x_{v'w}x_{vw'}=0.
  \end{equation}
  Subtracting \eqref{eq:AX=XB-3} from \eqref{eq:AX=XB-2} yields
  \eqref{eq:AX=XB}. Note that the derivation has degree $2$.
\end{example}

\section{Weisfeiler-Lehman is located between Nullstellensatz and Polynomial Calculus}

To relate the Weisfeiler-Lehman algorithm to our proof systems, we use
the following combinatorial game. The \emph{bijective $k$-pebble game}
on graphs $G$ and $H$ is played by two players called \emph{Spoiler}
and \emph{Duplicator}. Positions of the game are sets
$\pi\subseteq V(G)\times V(H)$ of size $|\pi|\le k$. The game starts
in an initial position $\pi_0$. If $|V(G)|\neq |V(H)|$ or if $\pi_0$ is not a
local isomorphism, then Spoiler wins the
game immediately, that is, after $0$ rounds, Otherwise, the game is played in a sequence of
\emph{rounds}. Suppose the position after the $i$th round is
$\pi_i$. In the $(i+1)$st round, Spoiler chooses a subset $\pi\subseteq\pi_i$ of size
$|\pi|<k$. Then 
Duplicator chooses a bijection
$f:V(G)\to V(H)$. Then Spoiler chooses a vertex $v\in V(G)$, and
the new position is $\pi_{i+1}:=\pi\cup\{(v,f(v))\}$. If $\pi_{i+1}$ is
not a local isomorphism, then Spoiler wins the
play after
$(i+1)$ rounds. Otherwise, the game continues with the $(i+2)$nd round.
Duplicator wins the play if it lasts forever, that is, if Spoiler
does not win after finitely many rounds. 
\emph{Winning strategies} for either
player in the game are defined in the natural way.

\begin{lemma}[\cite{caifurimm92,hel96}]
  $k$-WL distinguishes $G$
  and $H$ if and only if Spoiler has a winning strategy for the
  bijective $k$-pebble game on $G,H$ with initial position $\emptyset$.
\end{lemma}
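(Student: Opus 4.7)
The plan is to prove the equivalence by reducing it to a natural round-by-round invariant tying positions of the game to WL-colors. Let $\chi^r_k$ denote the coloring produced by $r$ rounds of $k$-WL, extended from $k$-tuples to tuples of length $\le k$ by padding with a fresh dummy symbol. To any position $\pi=\{(v_1,w_1),\ldots,(v_m,w_m)\}$ of size $m\le k$ associate the pair of tuples $\bar v_\pi:=(v_1,\ldots,v_m)$ in $G$ and $\bar w_\pi:=(w_1,\ldots,w_m)$ in $H$; the choice of ordering is immaterial since $k$-WL colorings respect the coordinate permutations that also permute $\pi$. I would then establish, by induction on $r$, the invariant that Duplicator has a strategy surviving $r$ rounds from position $\pi$ if and only if $\pi$ is a local isomorphism and $\chi^r_k(\bar v_\pi)=\chi^r_k(\bar w_\pi)$.

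Granted this invariant, the lemma drops out. $k$-WL distinguishes $G$ and $H$ precisely when, for some $r$, the stable color multisets differ in the two graphs, which unpacks to $\chi^r_k$ differing on the empty tuples, which by the invariant is exactly the assertion that Spoiler wins from $\emptyset$ in finitely many rounds. Passing from ``Duplicator survives each $r$'' to ``Duplicator has a single global strategy'' is a standard K\"onig-style argument on the finitely branching game tree (each round offers only finitely many subset choices, bijections, and vertex choices). The base case $r=0$ merely records that $\chi^0_k$ of a tuple is its atomic type (colors, equalities, adjacencies among its entries), which coincides with $\pi$ being a local isomorphism; this matches the $0$-round win condition in the game.

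For the inductive step from $r$ to $r+1$, I would use that the WL refinement rule expresses $\chi^{r+1}_k$ of a tuple via, for each coordinate position, the multiset of $\chi^r_k$-colors of all one-coordinate extensions. If $\chi^{r+1}_k(\bar v_\pi)=\chi^{r+1}_k(\bar w_\pi)$ and Spoiler shrinks $\pi$ to $\pi'\subseteq\pi$ with $|\pi'|<k$, then $\bar v_{\pi'},\bar w_{\pi'}$ still have equal $(r+1)$-round color (dropping coordinates of WL-equivalent tuples preserves equivalence after one refinement step), so the multisets $\{\chi^r_k(\bar v_{\pi'}v):v\in V(G)\}$ and $\{\chi^r_k(\bar w_{\pi'}w):w\in V(H)\}$ coincide. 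Duplicator's bijection $f$ is any bijection matching each $v$ with some $w$ of the same $r$-round extension color, and the inductive hypothesis gives a surviving strategy from each extended position. Conversely, if $\chi^{r+1}_k$ differs on some $\pi'\subseteq\pi$, these multisets differ, so any Duplicator bijection must map some $v$ to some $w$ with distinct $r$-round colors, which Spoiler selects.

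The main obstacle is bookkeeping: reconciling the pebble economy of the game (positions of size $\le k$; Spoiler must shrink to $<k$ before extending) with the precise look-ahead built into the chosen definition of $k$-WL. Different standard conventions use $k$ versus $k{+}1$ pebbles, or color $k$-tuples versus $(k{-}1)$-tuples, and one must pin down a convention so that the game step ``shrink to $<k$, then add one coordinate'' mirrors the WL step ``hold $<k$ coordinates, vary one'' exactly. Once the conventions are fixed and short tuples are consistently padded, the inductive step above is a routine unfolding of definitions; this alignment is where the real content of the lemma lies.
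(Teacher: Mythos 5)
The paper does not prove this lemma; it is imported verbatim from \cite{caifurimm92,hel96}, so there is no in-paper argument to compare against. Your sketch is the standard proof from those sources: a round-indexed invariant identifying ``Duplicator survives $r$ rounds from $\pi$'' with ``$\bar v_\pi$ and $\bar w_\pi$ have equal $r$-round WL colour,'' a K\"onig argument to pass to the infinite game, and the observation that the WL refinement rule (multisets of colours of one-coordinate substitutions) is exactly dual to the shrink--bijection--extend round of the bijective game. This is correct in outline. Two points deserve more than the ``routine bookkeeping'' label you give them. First, the indexing convention is not cosmetic: under the most common definition of $k$-WL (colouring $k$-tuples, corresponding to $C^{k+1}$), the matching game has $k+1$ pebbles, so the lemma as stated forces the convention in which $k$-WL is aligned with $C^k$; you flag this, and it is indeed where the statement could silently become false. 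Second, your inductive step uses that equal $(r+1)$-round colours of $\bar v_\pi,\bar w_\pi$ imply equal $(r+1)$-round colours of the subtuples $\bar v_{\pi'},\bar w_{\pi'}$ after Spoiler shrinks the position; this is true but is itself a small lemma (the refinement colour of a tuple determines the refinement colour of any tuple obtained by duplicating or dummy-padding coordinates, because atomic types record equalities), and it should be stated and proved rather than absorbed into ``dropping coordinates preserves equivalence.'' With those two items pinned down, your argument is complete and is essentially the proof the cited papers give.
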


Observe that each game position $\pi=\{(v_1,w_1),\ldots,(v_\ell,w_\ell)\}$
of size $\ell$
corresponds to a multilinear monomial $\vec x_\pi=x_{v_1w_1}\ldots x_{v_\ell
  w_\ell}$ of degree $\ell$; for the empty position we let $\vec
x_\emptyset:=1$.

\begin{lemma}\label{lem:from_Spoiler_to_monomial-PC}
  Let $\mathbb F$ be a field of characteristic $0$.  If Spoiler has a
  winning strategy for the $r$-round bijective $k$-pebble game on $G$,
  $H$ with initial position $\pi_0$, then there is a degree $k$
  monomial-PC derivation of $\vec x_{\pi_0}$ from $\GIeq{G,H}$ over
  $\mathbb F$.
\end{lemma}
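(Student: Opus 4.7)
The plan is to prove the lemma by induction on the round number $r$.

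In the \emph{base case} $r=0$, Spoiler wins immediately, so either $|V(G)|\neq|V(H)|$ or $\pi_0$ is not a local isomorphism. In the first sub-case, summing \eqref{eq:cont1} over all $w$ and \eqref{eq:cont2} over all $v$ both express $\sum_{v,w}x_{vw}$, giving the two distinct constants $|V(H)|$ and $|V(G)|$; their difference is a nonzero constant, from which we derive $1$ by scaling and then $\vec x_{\pi_0}$ by repeated multiplication by variables of $\pi_0$. In the second sub-case, some pair in $\pi_0$ (or a colour-mismatched singleton $\{(v,w)\}\subseteq\pi_0$, yielding $x_{vw}^2=0$ from \eqref{eq:comp} which combined with \eqref{eq:quad} gives $x_{vw}=0$) triggers an instance of \eqref{eq:comp}; multiplying by the remaining variables in $\vec x_{\pi_0}$ produces $\vec x_{\pi_0}$ in degree at most $|\pi_0|\leq k$.

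For the \emph{inductive step} $r\geq 1$, Spoiler's first move reduces $\pi_0$ to some $\pi\subseteq\pi_0$ with $|\pi|<k$. Let $T=\{(v,w)\in V(G)\times V(H):\text{Spoiler wins from }\pi\cup\{(v,w)\}\text{ in }\leq r-1\text{ rounds}\}$; by the induction hypothesis, each $\vec x_{\pi\cup\{(v,w)\}}=\vec x_\pi\,x_{vw}$ with $(v,w)\in T$ has a degree-$k$ monomial-PC derivation. Spoiler's winning strategy guarantees that $T$ satisfies the \emph{Hall property}: for every bijection $f:V(G)\to V(H)$ there is some $v$ with $(v,f(v))\in T$. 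The key step is to derive $\vec x_\pi$ from these inductive derivations and axioms \eqref{eq:cont1}, \eqref{eq:cont2} in degree $\leq k$; then $\vec x_{\pi_0}=\vec x_\pi\cdot\vec x_{\pi_0\setminus\pi}$ follows by variable-by-variable multiplication. K\"onig--Egerv\'ary applied to $K_{n,n}\setminus T$ (where $n=|V(G)|=|V(H)|$) yields a vertex cover $U\cup W$ with $|U|+|W|<n$ and $(V(G)\setminus U)\times(V(H)\setminus W)\subseteq T$. Multiplying \eqref{eq:cont1}, \eqref{eq:cont2} by $\vec x_\pi$, splitting the sums at $U$ and $W$, and cancelling the zero terms coming from $T$ produces the identity $(n-|U|-|W|)\vec x_\pi=-\sum_{(v,w)\in U\times W}\vec x_\pi\,x_{vw}$. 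Recursively eliminating the residual $U\times W$ block (using a sub-Hall property on $T\cap(U\times W)$ that follows from minimality of the cover, or by combining with the \eqref{eq:comp} axioms between pairs in $U\times W$) and then dividing by the nonzero integer $n-|U|-|W|$ in characteristic $0$ yields $\vec x_\pi=0$.

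The \emph{main obstacle} is carrying out the combinatorial-algebraic combination of the inductive step under the monomial-PC restriction---which forbids multiplying a derived non-monomial expression by a variable---while keeping every intermediate polynomial of degree at most $k$. The K\"onig step alone only reduces the problem to the block $(V(G)\setminus U)\times(V(H)\setminus W)$ of $T$; the residual $\sum_{(v,w)\in U\times W}\vec x_\pi\,x_{vw}$ must be eliminated through carefully staged linear combinations of axiom multiplications and inductive derivations, with the characteristic-$0$ assumption entering only at the very end to divide out the integer coefficient $n-|U|-|W|$.
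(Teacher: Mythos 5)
Your base case and the overall skeleton of the inductive step (multiply the axioms \eqref{eq:cont1}, \eqref{eq:cont2} by the monomial $\vec x_\pi$, use a Hall/K\"onig-type argument to form a linear combination whose non-derivable quadratic terms cancel, divide by a nonzero integer) match the paper. But the inductive step has a genuine gap exactly where you flag the "main obstacle": the residual block $U\times W$. For a general bipartite graph with no perfect matching it is simply false that this block can be eliminated. Concretely, take $E(B)=\{v_1w_1,v_2w_1,v_3w_1,v_1w_2,v_1w_3\}$: the unique minimum vertex cover is $U=\{v_1\}$, $W=\{w_1\}$, and $v_1w_1\in E(B)$, so $T\cap(U\times W)=\emptyset$ and there is no "sub-Hall property" to recurse on; nor do the axioms \eqref{eq:comp} help, since $(v_1,w_1)$ being a Duplicator-win position says nothing about local isomorphisms. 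So neither of your two proposed repairs goes through, and the identity $(n-|U|-|W|)\vec x_\pi=-\sum_{(v,w)\in U\times W}\vec x_\pi x_{vw}$ leaves you with terms the induction hypothesis cannot touch.

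The missing idea is a structural fact about the game, not about general bipartite graphs: for fixed $\pi$, the relation "Duplicator wins the $r$-round game from $\pi\cup\{(v,w)\}$" is induced by an equivalence relation on $V(G)\cup V(H)$, so the graph $B$ of Duplicator-win pairs is a disjoint union of complete bipartite graphs. The paper exploits this by taking a \emph{maximal} Hall-deficient set $S$ with $T:=N^B(S)$ and proving $N^B(T)=S$ (this is where the equivalence-relation property is used). Then in the combination $\sum_{v\in S}\vec x_\pi\bigl(\sum_w x_{vw}-1\bigr)-\sum_{w\in T}\vec x_\pi\bigl(\sum_v x_{vw}-1\bigr)$ the problematic block $S\times T$ occurs in both double sums and cancels \emph{exactly}, while every surviving cross term ($v\in S,\,w\notin T$ or $v\notin S,\,w\in T$) is a non-edge of $B$ and hence derivable by the induction hypothesis, leaving $(|T|-|S|)\vec x_\pi$ with $|T|-|S|\neq 0$. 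Your K\"onig cover would in fact also work \emph{if} you first established the biclique structure of $B$ (then $U\times W$ contains no $B$-edges), but that structural claim is precisely the nontrivial step your proposal omits.
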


\begin{proof}
  The proof is by induction over $r$. For the base case $r=0$, suppose
  that Spoiler
  wins after round $0$. If $|V(G)|\neq|V(H)|$, the system $\GIeq{G,H}$
  has the following degree-$1$
  Nullstellensatz refutation:
  \[\textstyle
  \sum_{v\in V(G)}\frac{1}{a}\left(\sum_{w\in
      V(H)}x_{vw}-1\right)+
 \sum_{w\in V(H)}-\frac{1}{a}\left(\sum_{v\in
      V(G)}x_{vw}-1\right)=1,
  \]
  where $a=|V(G)-V(H)|$. It yields a degree-$1$ monomial PC
  refutation of $\GIeq{G,H}$ and thus a derivation of $\vec x_{\pi_0}$
  of degree $|\pi_0|\le k$.
  Otherwise, $\pi_0$ is not a local
  isomorphism. Then there is a 2-element subset
  $\pi:=\{(v,w),(v',w')\}\subseteq\pi_0$ that is
  not a local isomorphism. Multiplying the axiom $x_{vw}x_{v'w'}=\vec x_{\pi}$
  with the monomial $\vec x_{\pi_0\setminus\pi}$, we
  obtain a monomial-PC derivation of
  $\vec x_{\pi_0}$ of degree $|\pi_0|\le k$.

  For the inductive step, suppose that Spoiler has a winning strategy
  for the $(r+1)$-round game starting in position $\pi_0$. Let
  $\pi\subseteq\pi_0$ with $|\pi|<k$ be the set chosen by Spoiler in
  the first round of the game. We can derive $\vec x_{\pi_0}$ from
  $\vec x_{\pi}$ by multiplying with the monomial $\vec
  x_{\pi_0\setminus\pi}$. Hence it suffices to show that we can derive
  $\vec x_{\pi}$ in degree $k$.

  Consider
  the bipartite graph $B$ on $V(G)\uplus V(H)$ which has an edge $vw$ for all
  $v\in V(G),w\in V(H)$ such that Spoiler \emph{cannot} win from position
$\pi\cup\{(v,w)\}$ in at most $r$ rounds. As from position $\pi$, Spoiler wins in $r+1$ rounds, there is no bijection $f:V(G)\to V(H)$ such that
$(v,f(v))\in E(B)$ for all $v\in V(G)$. By Hall's Theorem, it follows that
there is a set $S\subseteq V(G)$ such that $|N^B(S)|<|S|$. Let
$S$ be a maximal set with this property and let $T:=N^B(S)$. 

We claim that $N^B(T)=S$. To see this, suppose for contradiction
that there is a vertex $v\in N^B(T)\setminus S$. By the maximality
of $S$, we have $N^B(v)\not\subseteq T$. Let $w\in N^B(v)\setminus
T$. Moreover, let $w'\in N^B(v)\cap T$ (exists because $v\in
N^B(T)$) and $v'\in N^B(w')\cap S$ (exists because
$T=N^B(S)$). Then by the definition of $B$, Duplicator has a winning
strategy for the $r$-round bijective $k$-pebble
game with initial positions $\pi\cup\{(v',w')\}$, $\pi\cup\{(v,w')\}$,
and $\pi\cup\{(v,w)\}$, which implies 
that she also has a winning strategy for the game with initial
position $\pi\cup\{(v',w)\}$. Here we use the fact that the relation
``duplicator has a winning strategy for the $r$-round bijective $k$-pebble
game'' defines an equivalence relation on the initial positions.
Thus $(v',w)\in E(B)$, which contradicts $w\not\in
N^B(S)$. This proves the claim.

By the induction hypothesis and the claim we know that ($\star$) $\vec
x_\pi x_{vw}$ has a degree-$k$ monomial PC derivation if $v\in
S,w\notin T$ or $v\notin S,w\in T$.  Furthermore, we can derive
\begin{align}
  \sum_{v\in S} \vec x_{\pi}\left(\sum_{w\in V(H)} x_{vw} - 1\right) -
  \sum_{w\in T} \vec x_{\pi}\left(\sum_{v\in V(G)} x_{vw} - 1\right)
\end{align}
by multiplying the axioms \eqref{eq:cont1}, \eqref{eq:cont2} with
$\vec x_\pi$ and building a linear combination.
By subtracting and adding monomials from ($\star$), this polynomial
simplifies to $(|T|-|S|)\vec x_\pi$. After dividing by the coefficient
$|T|-|S| \neq 0$, we get $\vec x_\pi$.
We can divide by $|T|-|S|$ because the characteristic of the field
$\mathbb F$ is $0$. 
\end{proof}

The following lemma is, at least implicitly, from
\cite{Grohe.2012}. As the formal framework is different there, we
nevertheless give a proof.

\begin{lemma}\label{lem:from_CkEquiv_to_solution}
  Let $\mathbb F$ be a field of characteristic $0$ and $k\ge 2$. If
  Duplicator has a winning strategy for the bijective $k$-pebble game
  on $G$, $H$ then there is a solution $\alpha$ of
  $\operatorname{MLIN}(\GIeq{G,H}^k)$ over $\mathbb F$ that additionally
  satisfies $\alpha(X_{\pi})=0 \Longrightarrow \alpha(X_{\rho})=0$ for all
  $\pi\subseteq\rho$.
\end{lemma}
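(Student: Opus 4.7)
The plan is to construct $\alpha$ from Duplicator's positional winning strategy, setting $\alpha(X_\pi):=0$ on ``bad'' positions and a positive rational value on ``good'' ones. Call $\pi\subseteq V(G)\times V(H)$ of size $\le k$ \emph{good} if $\pi$ is a local isomorphism and Duplicator wins the bijective $k$-pebble game starting from $\pi$, and \emph{bad} otherwise. Monotonicity $\alpha(X_\pi)=0\Rightarrow\alpha(X_\rho)=0$ for $\pi\subseteq\rho$ is then essentially automatic: non-local-isomorphisms are closed under supersets, and if Spoiler has a winning strategy from some bad $\pi$ with $|\pi|<k$, then from any $\rho\supseteq\pi$ with $|\rho|\le k$ he wins by first restricting to $\pi$ in round one and then playing his winning strategy (when $|\pi|=k$ we have $\rho=\pi$ and there is nothing to show).

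For the positive values, I follow the approach of~\cite{Grohe.2012}. Using Duplicator's winning strategy, one refines the $k$-WL coloring into an equivalence relation on $\le k$-tuples over $V(G)\sqcup V(H)$ with the following properties: (a) the color of a good position $\pi$ is independent of the chosen ordering of $\pi$, so we may write $\tau(\pi)$; (b) each color class $\tau$ meets $V(G)^\ell$ and $V(H)^\ell$ in the same number of tuples, which we denote $c_\tau$. We then put $\alpha(X_\pi):=1/c_{\tau(\pi)}$ for good $\pi$, and $\alpha(X_\emptyset):=1$.

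It remains to verify the axioms of $\operatorname{MLIN}(\GIeq{G,H}^k)$. The equations from~\eqref{eq:comp} and their monomial multiples are satisfied because $\alpha(X_\pi)=0$ whenever $\pi$ contains a non-local-iso pair. For the row-sum axioms---obtained from~\eqref{eq:cont1} by multiplication with $\vec x_\sigma$ where $\sigma$ is of size~$<k$---multilinearisation (using injectivity of local isomorphisms in the case that $\sigma$ already pairs something with $w$) yields the identity $\sum_{v\in V(G)}\alpha(X_{\sigma\cup\{(v,w)\}})=\alpha(X_\sigma)$ for each $w\in V(H)$. This reduces, via the partition of extensions by their color class $\tau'$, to a counting argument: because the refined coloring is ``equitable'' on both sides and its classes have the same cardinality on the $G$- and $H$-sides, the total mass matches $1/c_{\tau(\sigma)}=\alpha(X_\sigma)$. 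The column-sum axioms from~\eqref{eq:cont2} are handled symmetrically.

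The main obstacle is constructing the coloring with properties (a) and (b) above: ordering-invariance is a stability property of the $k$-WL refinement once equality types are included in the initial coloring, while the $G/H$-balance of class sizes follows inductively from the bijective nature of Duplicator's moves, which propagates the balance as tuples grow from length $\ell$ to $\ell+1$. The verification of the sum axioms then amounts to the same ``equitable partition'' accounting that underlies the known equivalence between $k$-WL and Sherali-Adams level $k$, and it is here that having characteristic $0$ is used, via divisions by the strictly positive class sizes $c_\tau$.
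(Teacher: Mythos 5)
Your proposal is correct and follows essentially the same route as the paper: the paper also defines the equivalence (``type'') of tuples via Duplicator's winning strategy on paired initial positions, sets $\alpha(X_\pi)$ to $0$ when the types of the $G$- and $H$-sides differ and to the reciprocal of the type's class size otherwise, and verifies the sum equations by the same equitable-partition counting $|\{v:\tp(\bar vv)=\tp(\bar ww)\}|=t(\bar ww)/t(\bar w)$. The details you defer (order-invariance of types and the equality of class sizes on the two sides) are exactly the points the paper records as \eqref{eq:GO-0} and the observation that equal types imply equal counts, so nothing essential is missing.
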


\begin{proof}
  For all $\ell\le k$, we define an equivalence relation $\equiv^\ell$
  on $V(G)^\ell\cup V(H)^\ell$ as follows: for $I,I'\in\{G,H\}$ and
  $\ell$-tuples $\bar u=(u_1,\ldots,u_\ell)\in V(I)^\ell$, $\bar
  u'=(u'_1,\ldots,u'_\ell)\in V(I')^\ell$, we let $\bar
  u\equiv^\ell\bar u'$ if Duplicator has a winning strategy for the
  bijective $k$-pebble game on $(I,I')$
  with initial position
  $\{(u_1,u'_1),\ldots,(u_\ell,u'_\ell)\}$. We call the equivalence
  class of a tuple $\bar u$ the \emph{type} of $\bar u$ and denote it
  by $\tp(\bar u)$. Note that if $\tp(\bar u)=\tp(\bar u')$ then the
  mapping $\{(u_1,u'_1),\ldots,(u_\ell,u'_\ell)\}$ is a local
  isomorphism (of course the converse does not hold). 

  For $I\in\{G,H\}$ and $\bar u\in V(I)^\ell$ we let
  \[
  t(\bar u):=|\tp(\bar u)\cap V(I)^\ell|.
  \]
  It is easy to see that if $\bar u=(u_1,\ldots,u_\ell)\in V(I)^\ell$ and $\bar
  u'=(u'_1,\ldots,u'_{\ell'})\in V(I)^{\ell'}$, then
  \begin{equation}\label{eq:GO-0}
  \{u_1,\ldots,u_\ell\}=\{u'_1,\ldots,u'_{\ell'}\}\implies t(\bar
  u)=t(\bar u').
  \end{equation}
  In particular, the function $t$ is invariant under
  permutations. Also observe that for $\bar v\in V(G)^\ell$ and $\bar
  w\in V(H)^\ell$,
  \[
  \tp(\bar v)=\tp(\bar w)\implies t(\bar v)=t(\bar w).
  \]

  Now suppose that Duplicator has a winning strategy for the bijective
  $k$-pebble game on $G$, $H$. We define the desired solution $\alpha$
  to
  $\operatorname{MLIN}(\GIeq{G,H}^k)$ by
  \begin{align*}
    \alpha(X_\emptyset)&= 1\\
    \intertext{and for
      $\pi=\{(v_1,w_1),\ldots,(v_\ell,w_\ell)\}\subseteq V(G)\times
      V(H)^\ell$, where $\ell\le k$}
    \alpha(X_\pi)&=
    \begin{cases}\displaystyle
      \frac{1}{t(v_1,\ldots,v_\ell)}&\text{if
      }\tp(v_1,\ldots,v_\ell)=\tp(w_1,\ldots,w_\ell),\\
      0&\text{otherwise}.
    \end{cases}
  \end{align*}
  It follows from \eqref{eq:GO-0} that $\alpha$ is well-defined.
  We need to prove that it  satisfies the equations of $\operatorname{MLIN}(\GIeq{G,H}^k)$:
  \begin{align}
    \label{eq:GO-1}
    \sum_{v\in V(G)}X_{\pi\cup\{(v,w)\}}-X_{\pi}&=0&&\text{for all
      $w$ and $\pi$ of size $|\pi|\le k-1$},\\
    \label{eq:GO-2}
    \sum_{w\in V(H)}X_{\pi\cup\{(v,w)\}}-X_{\pi}&=0&&\text{for all
      $v$ and $\pi$ of size $|\pi|\le k-1$},\\
    \label{eq:GO-3}
    X_{\pi\cup\{(v,w),(v',w')\}}&=0&&\parbox[t]{5.5cm}{for all
      $v,v',w,w'$ such that $\{(v,w),(v',w')\}$ is no local
      isomorphism and all $\pi$ of size $|\pi|\le k-2$.}
  \end{align}
  The equations \eqref{eq:GO-3} are satisfied, because if
  $\{(v,w),(v',w')\}$ is no local isomorphism, then neither is
  $\pi\cup\{(v,w),(v',w')\}$. Assuming 
  $\pi=\{(v_1,w_1),\ldots,(v_{\ell},w_\ell)\}$ this implies
  $\tp(v_1,\ldots,v_\ell,v,v')\neq \tp(w_1,\ldots,w_\ell,w,w')$ and thus
  \[
  \alpha(X_{\pi\cup\{(v,w),(v',w')\}})=0.
  \]
  To see that the equations \eqref{eq:GO-1} are satisfied, let
  $\pi=\{(v_1,w_1),\ldots,(v_{\ell},w_\ell)\}$ for some $\ell\le k-1$
  and $w\in V(H)$. Let $\bar v=(v_1,\ldots,v_\ell)$ and $\bar
  w=(w_1,\ldots,w_\ell)$. If $\tp(\bar v)\neq\tp(\bar w)$ then
  \[
  \sum_{v\in
    V(G)}\alpha(X_{\pi\cup\{(v,w)\}})=\alpha(X_\pi)=0,
  \]
  and thus \eqref{eq:GO-1} is satisfied. Otherwise,
\begin{equation}\label{eq:GO-4}
  \alpha(X_\pi)=\frac{1}{t(\bar
    v)}=\frac{1}{t(\bar
    w)},
  \end{equation}
  and 
  \begin{align}
    \notag
  \sum_{v\in
    V(G)}\alpha(X_{\pi\cup\{(v,w)\}})&=\sum_{\substack{v\in V(G)\\\tp(\bar vv)=\tp(\bar
      ww)}}\alpha(X_{\pi\cup\{(v,w)\}})=\sum_{\substack{v\in
    V(G)\\\tp(\bar vv)=\tp(\bar ww)}}\frac{1}{t(\bar
ww)}\\
\label{eq:GO-5}
&=\frac{|\{v\in V(G)\mid \tp(\bar vv)=\tp(\bar ww)\}|}{t(\bar
ww)}.
\end{align}
We observe that for $\bar v'$ with $\tp(\bar v')=\tp(\vec w)$ we have 
\[
|\{v\in V(G)\mid \tp(\bar vv)=\tp(\bar ww)\}|=|\{v\in V(G)\mid
\tp(\bar v'v)=\tp(\bar ww)\}|.
\]
This follows from the properties of the bijective pebble game. Thus
\begin{equation}\label{eq:GO-6}
|\{v\in V(G)\mid \tp(\bar vv)=\tp(\bar ww)\}|=\frac{t(\bar ww)}{t(\bar w)}.
\end{equation}
Equations \eqref{eq:GO-4}--\eqref{eq:GO-6} imply that $\alpha$
satisfies \eqref{eq:GO-1}.

The proof that equations \eqref{eq:GO-2} are satisfied is symmetric.

Thus we have indeed defined a solution for the system
$\operatorname{MLIN}(\GIeq{G,H}^k)$. It remains to prove that this
solution satisfies $\alpha(X_{\pi})=0 \Longrightarrow \alpha(X_{\rho})=0$ for all
$\pi\subseteq\rho$. So let
$\rho=\{(v_1,w_1),\ldots,(v_m,w_m)\}$ and
$\pi=\{(v_1,w_1),\ldots,(v_\ell,w_\ell)\}\subseteq\rho$ for some
$\ell\le m\le k$. Then
\begin{align*}
  \alpha(X_{\pi})=0&\implies
  \tp(v_1,\ldots,v_\ell)\neq\tp(w_1,\ldots,w_\ell)\\
  &\implies \tp(v_1,\ldots,v_m)\neq\tp(w_1,\ldots,w_m)\\
  &\implies \alpha(X_{\rho})=0.
\end{align*}
\end{proof}

\begin{theorem}\label{thm:monomial}
Let $\mathbb F$ be a field of characteristic $0$. Then
  the following statements are equivalent for two graphs $G$ and $H$.
\begin{enumerate}
  \item The graphs are distinguishable by $k$-WL. 
  \item There is a degree-$k$ monomial-PC refutation of $\GIeq{G,H}$
    over $\mathbb F$.
\end{enumerate}
\end{theorem}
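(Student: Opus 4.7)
The plan is to observe that this theorem is essentially the composition of the three preceding lemmas, routed through the pebble-game characterisation of $k$-WL, so the proof should be short and consist mostly of bookkeeping to align the statements.

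For the direction (1) $\Rightarrow$ (2), I would apply the lemma of Cai--F\"urer--Immerman and Hella: $k$-WL distinguishing $G$ and $H$ means Spoiler has a winning strategy for the bijective $k$-pebble game with initial position $\pi_0 = \emptyset$. I would then feed $\pi_0 = \emptyset$ into Lemma~\ref{lem:from_Spoiler_to_monomial-PC}, which yields a degree-$k$ monomial-PC derivation of $\vec{x}_\emptyset$ from $\GIeq{G,H}$. Since $\vec{x}_\emptyset = 1$ by the convention fixed just before Lemma~\ref{lem:from_Spoiler_to_monomial-PC}, this derivation is exactly a degree-$k$ monomial-PC refutation of $\GIeq{G,H}$. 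The only subtlety to flag is that Lemma~\ref{lem:from_Spoiler_to_monomial-PC} is stated for strategies that win in $r$ rounds, so one has to note that a winning strategy, by definition, wins in some finite $r$, which is enough.

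For (2) $\Rightarrow$ (1), I would proceed by contrapositive. Assuming $k$-WL does not distinguish $G$ and $H$, the same pebble-game lemma gives that Duplicator has a winning strategy for the bijective $k$-pebble game on $G, H$ starting from $\emptyset$. From this hypothesis one can feed the game into Lemma~\ref{lem:from_CkEquiv_to_solution} (noting $k \ge 2$, which is the nontrivial case; for $k = 1$ one can dispatch the claim directly or note the hypothesis of the theorem implicitly gives $k \ge 2$ since otherwise the $\GIeq$ system already handles it) to obtain a solution $\alpha$ of $\operatorname{MLIN}(\GIeq{G,H}^k)$ over $\mathbb{F}$ which additionally satisfies the downward-closure condition $\alpha(X_\pi) = 0 \implies \alpha(X_\rho) = 0$ for $\pi \subseteq \rho$. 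Finally, Lemma~\ref{lem:monomialPC_MLIN} applied with $d = k$ and $\SP = \GIeq{G,H}$ shows that $\GIeq{G,H}$ has no degree-$k$ monomial-PC refutation, contradicting (2).

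There is no real obstacle here; the three main lemmas have been engineered so that they precisely compose. The only cosmetic points to address are: (a) verifying that the convention $\vec{x}_\emptyset := 1$ turns a derivation of $\vec{x}_{\pi_0}$ into a refutation in the case $\pi_0 = \emptyset$; (b) checking that the small-$k$ boundary case is covered, either by the standing assumption $|V(G)| \ge 2$ or $|V(H)| \ge 2$ fixed at the start of the $\GIeq$ section, or by a direct remark; and (c) ensuring that characteristic $0$ is used consistently throughout (both of the feeder lemmas already require it, so no additional work is needed).
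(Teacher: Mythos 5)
Your proposal is correct and matches the paper's proof, which likewise just composes Lemmas~\ref{lem:monomialPC_MLIN}, \ref{lem:from_Spoiler_to_monomial-PC} and \ref{lem:from_CkEquiv_to_solution} via the pebble-game characterisation of $k$-WL. The bookkeeping points you flag (the convention $\vec x_\emptyset=1$, finiteness of the number of rounds, and the small-$k$ boundary case) are exactly the details the paper leaves implicit.
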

\begin{proof}
  Follows immediately from lemmas \ref{lem:monomialPC_MLIN},
  \ref{lem:from_Spoiler_to_monomial-PC} and
  \ref{lem:from_CkEquiv_to_solution}. 
\end{proof}

We do not now the exact relation between Nullstellensatz and monomial-PC for the graph isomorphism polynomials.
In particular, we do not know whether degree-$k$ Nullstellensatz is as
strong as the $k$-dimensional Weisfeiler-Lehman algorithm and leave
this as open question.
In the other direction, we remark that, at least for degree 2, full
polynomial calculus is strictly stronger than degree-2 monomial-PC and
hence the Colour Refinement Algorithm (see Example~\ref{exa:PC_is_stronger}).
However, we believe that the gap is not large. 
Our intuition is supported by Theorem~\ref{thm:main}, which implies that low-degree PC is not able to distinguish Cai-Fürer-Immerman graphs.
Thus, polynomial calculus has similar limitations as the Weisfeiler-Lehman algorithm \cite{caifurimm92}, 
Resolution \cite{Toran.2013}, 
the Sherali-Adams hierarchy \cite{atsman13,Grohe.2012} and the Positivstellensatz
\cite{DWWZ.2013}.

\begin{tikzpicture}
  [scale=.6, knoten/.style={circle,draw=black,fill=black,
  inner  sep=0pt,minimum  size=1.5mm}]

  \node[knoten,label=left:{1},fill=green] (1) at (1.5,2.6) {};
  \node[knoten,label=left:{2},fill=green] (2) at (1.5,3.1) {};
  \node[knoten,label=left:{3},fill=blue] (3) at (0,0) {};
  \node[knoten,label=left:{4},fill=blue] (4) at (0,.5) {};
  \node[knoten,label=right:{5},fill=red] (5) at (3,0) {};
  \node[knoten,label=right:{6},fill=red] (6) at (3,.5) {};

 \draw[-] (4) -- (2);
 \draw[-] (3) -- (1);

 \draw[-] (2) -- (6);
 \draw[-] (1) -- (5);

 \draw[-] (4) -- (6);
 \draw[-] (3) -- (5);

\begin{scope}[xshift=5cm]

  \node[knoten,label=left:{1},fill=green] (1) at (1.5,2.6) {};
  \node[knoten,label=left:{2},fill=green] (2) at (1.5,3.1) {};
  \node[knoten,label=left:{3},fill=blue] (3) at (0,0) {};
  \node[knoten,label=left:{4},fill=blue] (4) at (0,.5) {};
  \node[knoten,label=right:{5},fill=red] (5) at (3,0) {};
  \node[knoten,label=right:{6},fill=red] (6) at (3,.5) {};

 \draw[-] (4) -- (2);
 \draw[-] (3) -- (1);

 \draw[-] (2) -- (6);
 \draw[-] (1) -- (5);

 \draw[-] (4) -- (5);
 \draw[-] (3) -- (6);

\end{scope}

\end{tikzpicture}

\begin{example}\label{exa:PC_is_stronger}
    Let $G$ be the disjoint union of two triangles and $H$ be a 6-cycle as depicted in the figure, where the vertices 1,2 are green, 3,4 are blue and 5,6 are red. 
    $G$ and $H$ cannot be distinguished by the 2-dimensional Weisfeiler-Lehman algorithm.
    Thus $\Piso(G,H)$ has no degree 2 monomial-PC refutation. 
    However, there is a degree 2 polynomial calculus refutation.
\end{example}

\begin{proof}
  Consider the axiom $\sum_{i=1}^6x_{1i}-1$. As $3,4,5,6$ have different colours than $1$ this simplifies to $x_{11}+x_{12}-1$.
  Multiplying $x_{11}+x_{12}-1$ with $x_{33}$ yields $x_{11}x_{33}+x_{12}x_{33}-x_{33}$ and hence ($\star$) $x_{11}x_{33}-x_{33}$ by subtracting the axiom $x_{12}x_{33}$.
  Similar, multiplying $x_{33}+x_{34}-1$ with $x_{11}$ yields $x_{33}x_{11}+x_{34}x_{11}-x_{11}$ and hence ($\star\star$) $x_{33}x_{11}-x_{11}$.
  Subtracting ($\star\star$) from ($\star$) yields (1) $x_{11}-x_{33}$.
  Note that we have obtained (1) by considering the edges between blue and green vertices. We can proceed the same way for the other two colour pairs to obtain (2) $x_{55}-x_{11}$ and (3) $x_{34}-x_{55}$.
  Now, adding (1), (2), and (3) yields (A) $x_{34}-x_{33}$. 
  We multiply with $x_{34}$ (this step is not allowed in monomial-PC) to get $x^2_{34}-x_{33}x_{34}$ which simplifies to ($\ast$) $x_{34}$ by adding the axiom $x_{33}x_{34}$ and subtracting $x^2_{34}-x_{34}$.
  In the same way we get ($\ast\ast$) $x_{34}$, by multiplying (A) with $x_{34}$.
  By subtracting ($\ast$) and ($\ast\ast$) from the simplified axiom $x_{33}+x_{34}-1$ and multiplying with $-1$ we have derived 1.
\end{proof}

\section{Groups CSPs and Tseitin Polynomials}

\subsection{From Group CSPs to Graph Isomorphism}
We start by defining a class of a constraint satisfaction problems
(CSPs) where the constraints are co-sets of certain
groups. %
Throughout this section, we let $\Gamma$ be a finite group.
Recall that a \emph{CSP-instance} has the form $(\CX,D,\CC)$, where $\CX$
is a finite set of \emph{variables}, $D$ is
a finite set called the \emph{domain} and $\CC$ a finite set of
\emph{constraints} of the form $(\bar x,R)$, where 
$\bar x\in \CX^k$ and
$R\subseteq D^k$, for some $k\ge 1$. A \emph{solution} to such an instance is an
assignment $\alpha:X\to D$ such that $\alpha(\bar x)\in R$ for all
constraints $(\bar x,R)\in\CC$.
An instance of a \emph{$\Gamma$-CSP} has domain $\Gamma$ and constraints of the form
$\big(\bar x,\Delta\gamma\big)$, where $\Delta\le\Gamma^k$ and
$\gamma\in\Gamma^k$. We specify instances as sets $\CC$ of constraints;
the variables are given implicitly. With each constraint
$C=\big((x_1,\ldots,x_k),\Delta\gamma\big)$, we associate the
\emph{homogeneous} constraint $\tilde
C=\big((x_1,\ldots,x_k),\Delta\big)$. For an instance
$\CC$, we let $\tilde{\CC}=\{\tilde C\mid C\in\CC\}$.

Next,  we reduce $\Gamma$-CSP to GI.
Let $\CC$ be a $\Gamma$-CSP in the variable set $\CX$. We construct a
coloured graph $G(\CC)$ as follows.
\begin{itemize}
\item For every variable $x\in\CX$ we take vertices $\gamma^{(x)}$ for all
  $\gamma\in \Gamma$. 
  We colour all these vertices with
  a fresh colour $L^{(x)}$.
\item For every constraint $C=((x_1,\ldots,x_k),\Delta\gamma)\in\CC$ we add
  vertices $\beta^{(C)}$ for all $\beta\in\Delta\gamma$. 
  We colour
  all these vertices with a fresh colour $L^{(C)}$. If
  $\beta=(\beta_1,\ldots,\beta_k)$, we add an edge
  $\{\beta^{(C)},\beta_i^{(x_i)}\}$ for all $i\in[k]$. We
    colour this edge with colour $M^{(i)}$.
\end{itemize}
We let $\tilde G(\CC)$ be the graph $G(\tilde\CC)$ where for all
constraints $C\in\CC$ we identify the
two colours $L^{(C)}$ and $L^{(\tilde C)}$.

\begin{lemma}\label{lem:isomorphicCSP}
  A $\Gamma$-CSP instance $\CC$ is satisfiable if and only if the graphs $G(\CC)$ and $\tilde G(\CC)$ are
  isomorphic.
\end{lemma}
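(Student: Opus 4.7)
The plan is to prove both implications by explicitly exhibiting either an isomorphism from a solution, or a solution from an isomorphism, using the right-coset structure of the constraints.

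For the forward direction, suppose $\alpha:\CX\to\Gamma$ is a solution to $\CC$. I will define a map $\phi:G(\CC)\to \tilde{G}(\CC)$ by "right-multiplying out" the solution. On variable vertices, set $\phi(\gamma^{(x)}):=(\gamma\alpha(x)^{-1})^{(x)}$, which is clearly a colour-preserving bijection on each $L^{(x)}$-class. For a constraint $C=((x_1,\dots,x_k),\Delta\gamma)$, write $\bar\alpha_C:=(\alpha(x_1),\dots,\alpha(x_k))\in\Gamma^k$ and set $\phi(\beta^{(C)}):=(\beta\bar\alpha_C^{-1})^{(\tilde C)}$. Since $\alpha$ satisfies $C$, we have $\bar\alpha_C\in\Delta\gamma$, and therefore right-multiplication by $\bar\alpha_C^{-1}$ is a bijection from the coset $\Delta\gamma$ onto the subgroup $\Delta$, so $\phi$ maps $L^{(C)}$-vertices bijectively onto the $L^{(\tilde C)}$-vertices (which carry the identified colour). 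The edge check is straightforward: the $M^{(i)}$-edge $\{\beta^{(C)},\beta_i^{(x_i)}\}$ is sent to $\{(\beta\bar\alpha_C^{-1})^{(\tilde C)},(\beta_i\alpha(x_i)^{-1})^{(x_i)}\}$, and this is an $M^{(i)}$-edge of $\tilde G(\CC)$ because the $i$th component of $\beta\bar\alpha_C^{-1}$ is exactly $\beta_i\alpha(x_i)^{-1}$.

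For the backward direction, let $\phi:G(\CC)\to\tilde G(\CC)$ be an isomorphism. Colour-preservation gives, for each variable $x$, a bijection $\phi_x:\Gamma\to\Gamma$ defined by $\phi(\gamma^{(x)})=\phi_x(\gamma)^{(x)}$, and, for each constraint $C$, a bijection $\psi_C$ from the $L^{(C)}$-vertices (indexed by $\Delta\gamma$) onto the $L^{(\tilde C)}$-vertices (indexed by $\Delta$); here the identification of colours $L^{(C)}$ and $L^{(\tilde C)}$ in $\tilde G(\CC)$ is used crucially so that $\psi_C$ goes between exactly these two sets. Using the $M^{(i)}$-coloured edges, $\psi_C(\beta)_i=\phi_{x_i}(\beta_i)$, so $\psi_C$ is the restriction of $\phi_{x_1}\times\dots\times\phi_{x_k}$ to $\Delta\gamma$, and its image is $\Delta$. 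Define $\alpha(x):=\phi_x^{-1}(e)$; then $(\alpha(x_1),\dots,\alpha(x_k))$ is the unique preimage under $\psi_C$ of $(e,\dots,e)\in\Delta$, so it lies in $\Delta\gamma$, i.e.\ $\alpha$ satisfies $C$.

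The only subtle step is verifying in the backward direction that the edge-colour information forces $\psi_C$ to factor as $\phi_{x_1}\times\dots\times\phi_{x_k}$; everything else is elementary group theory. The colour identification $L^{(C)}\equiv L^{(\tilde C)}$ is the reason the otherwise-different constraint gadgets can be matched by $\phi$, and hence the reason the equivalence holds.
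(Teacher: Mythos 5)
Your proposal is correct and follows essentially the same route as the paper: the forward direction uses the identical right-multiplication map $\beta^{(C)}\mapsto(\beta\bar\alpha_C^{-1})^{(\tilde C)}$, and the backward direction, though phrased via the factorisation $\psi_C=\phi_{x_1}\times\dots\times\phi_{x_k}$, is the same argument the paper makes by tracing the preimage of the all-identity constraint vertex $\bar 1^{(\tilde C)}$ through the $M^{(i)}$-coloured edges.
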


\begin{proof}
  Let $G=(V,E):=G(\CC)$ and $\tilde G=(\tilde V,\tilde E):=\tilde G(\CC)$.
  Let $\phi:\CX\to\Gamma$ be a satisfying assignment for $\CC$. We define
  a mapping $f:V\to\tilde V$ as follows:
  \begin{itemize}
  \item For every $x\in\CX$ and $\gamma\in\Gamma$ we let
    $f(\gamma^{(x)}):=\big(\gamma\phi(x)^{-1}\big)^{(x)}$.
  \item For every $C=(x_1,\ldots,x_k,\Delta\gamma)\in\CC$ and every
    $\beta=(\beta_1,\ldots,\beta_k)\in\Delta\gamma$ we let 
    \[
    f(\beta^{(C)}):=\big(\beta_1\phi(x_1)^{-1},\ldots,\beta_k\phi(x_k)^{-1}\big)^{(C)}.
    \]
    To see that this is well defined, note that $\phi(\bar
    x):=\big(\phi(x_1),\ldots,\phi(x_k)\big)\in\Delta\gamma$, because
    $\phi$ satisfies the constraint $C$. Thus 
    \[\beta\phi(\bar
    x)^{-1}=\big(\beta_1\phi(x_1)^{-1},\ldots,\beta_k\phi(x_k)^{-1}\big)\in\Delta.
    \] 
  \end{itemize}
  It is easy to see that the mapping $f$ is bijective. To see that it
  is an isomorphism, consider, for some constraint
  $C=(x_1,\ldots,x_k,\Delta\gamma)\in\CC$ and some $i\in[k]$, a vertex
  $\beta^{(C)}$, where $\beta=(\beta_1,\ldots,\beta_k)\in\Delta\gamma$, and a vertex
  $\gamma^{(x_i)}$, where $\gamma\in\Gamma$. Then 
  \begin{align*}
  \{\beta^{(C)},\gamma^{(x_i)}\}\in
  E&\iff\beta_i=\gamma\iff\beta_i\phi(x_i)^{-1}=\gamma\phi(x_i)^{-1}\\
  &\iff
  \{f(\beta^{(C)}),f(\gamma^{(x_i)})\}\in \tilde E.
  \end{align*}
  To prove the backward direction, suppose that $f$ is an isomorphism
  from $G$ to $\tilde G$. We define an assignment $\phi:\CX\to\Gamma$ by
  \[
  \phi(x)^{(x)}=f^{-1}(1^{(x)}).
  \]
  (Here $1^{(x)}$ denotes the $x$-copy of the
  unit element $1\in\Gamma$ in the graph $\tilde G$.) To see that
  $\phi$ is a satisfying assignment, consider a constraint
  $C=(x_1,\ldots,x_k,\Delta\gamma)\in\CC$. Let
  $\beta=(\beta_1,\ldots,\beta_k)$ with $\beta_i=\phi(x_i)$. 

  We need to prove that $\beta\in\Delta\gamma$. We have
  $f(\beta_i^{(x_i)})=1^{(x_i)}$. As $\bar 1=(1,\ldots,1)\in\Delta$,
  the vertex $\bar 1^{(\tilde C)}\in \tilde V$ has edges to all
  vertices $f(\beta_i^{(x_i))})$. Thus the vertex $f^{-1}(\bar
  1^{(\tilde C)})$ has colour $L^{(C)}=L^{(\tilde C)}$ and edges to
  the vertices $\beta_i^{(x_i)}$. This implies that $f^{-1}(\bar
  1^{(\tilde C)})=\alpha^{(C)}$ for some $\alpha\in\Delta\gamma$ and
  $\alpha=(\beta_1,\ldots,\beta_k)=\beta$.
\end{proof}

\begin{remark}
  The lemma shows that our construction of graphs $G(\CC)$ and $\tilde
  G(\CC)$ from a $\Gamma$-CSP gives a reduction from $\Gamma$-CSPs
  to coloured graph isomorophism. Clearly, this is a polynomial time
  reduction. Observe that for a fixed group $\Gamma$ and a fixed-arity
  $k$, the graphs $G(\CC)$ and $\tilde
  G(\CC)$ have a \emph{bounded colour class size}, that is, there is a
  bound (of $|\Gamma|^k$) on the maximum number of vertices of each
  colour. It is long known that the isomorphism problem for graphs of
  bounded colour class size is solvable in polynomial time
  \cite{bab79,furhopluk80}.

  Interestingly, there is also a converse reduction. Let $G,H$ be a
  pair coloured graph where the colour classes have size at most
  $\ell$. Suppose that there are $m$ colours, and let $C_i(G)$ and
  $C_i(H)$ be the vertices of $G,H$, of colour $i$. Without loss of
  generality we may assume that $|C_i(G)|=|C_i(H)|=:\ell_i$, where
  $\ell_i\le\ell$. Suppose that
  $C_i(G)=\{v_{i1},\ldots,v_{i\ell_i}\}$ and
  $C_i(H)=\{w_{i1},\ldots,w_{i\ell_i}\}$. Then an isomorphisms $g$ from $G$
  to $H$ can be described as tuples $(\gamma_1,\ldots,\gamma_m)$ of
  mappings $\gamma_i:[\ell]\to[\ell]$: we let
  $g(v_{ij})=w_{i\gamma_i(j)}$; conversely for a given isomorphism $g$
  we choose $\gamma_i(j)$ to be the $j'$ such that $g(v_{ij})=w_{ij'}$
  if $1\le j\le\ell_i$, and we let $\gamma_i(j):=j$ if $\ell_i+1\le
  j\le\ell$.

  This enables us to describe isomorphisms as solutions to an $S_\ell$-CSP
  instance (where
  $S_\ell$ denotes the symmetric group on $[\ell]$), with variables
  $x_1,\ldots,x_m$ and the following constraints:
  \begin{itemize}
  \item for all $i\in[m]$ a unary constraint $(x_i,P_i)$, where
    \[
    P_i:=\big\{\gamma_i\in S_\ell\bigmid\gamma_i(j)=j\text{ for all }j>\ell_i\};
    \]
  \item for all $i,i'\in[m]$  (not necessarily distinct)  a binary constraints $((x_i,x_{i'}),R_{ii'})$ with
  \begin{align*}
  R_{ii'}&=\Big\{(\gamma_i,\gamma_{i'})\in S_\ell^2\Bigmid \forall
  j\in[\ell_i],j'\in[\ell_{i'}]:\\
  &\hspace{2cm}\big(v_{ij}v_{i'j'}\in E(G)\iff
  w_{i\gamma_i(j)}w_{i'\gamma_{i'}(j')}\in E(H)\Big\}.
  \end{align*}
  \end{itemize}
  Observe that the $P_i$ are subgroups of $S_\ell$ and the $R_{ii'}$
  are cosets of some subgroup of $S_\ell^2$ (essentially the
  automorphism group of the subgraph of $G$ with vertices in
  $C_i(G)\cup C_{i'}(G)$ and all edges between the classes). Hence
  this really defines an $S_\ell$-CSP instance. It is easy to see that
  solutions to this CSP instance correspond to isomorphims between $G$
  and $H$.

  Dawar \cite{daw14} observed that $\Gamma$-CSPs have a constraint language that
  admits Mal'tsev polymorphisms (see~\cite{buldal06}). Such CSPs are
  known to be solvable in polynomial time~\cite{bul02,buldal06}. So we
  obtain a reduction from bounded colour class graph isomorphism to
  constraint satisfaction for constraint languages with Mal'tsev
  polymorphisms. Such a reduction (essentially the same one as ours) has
  also been given in \cite{klilasoch+14}.
\end{remark}

\begin{example}[The Tseitin Tautologies and the CFI-construction]\label{exa:tseitin}
  For every graph $H$ and set $T\subseteq V(H)$ we define the following $\mathbb Z_2$-CSP $\TS=\TS(H,T)$.
  \begin{itemize}
  \item For every edge $e\in E(H)$ we have a variable $z_e$.
  \item For every vertex $v\in V(H)$ we define a constraint
    $C_v$. Suppose that $v$ is incident with the edges
    $e_1,\ldots,e_k$ (in an arbitrary order), and let
    $z_i:=z_{e_i}$. Let 
    $
    \Delta:=\{(i_1,\ldots,i_k)\in\mathbb
    Z_2^n\mid\sum_{i=1}^ki_j=0\}\le\mathbb Z_2^k.
    $
    We will also use the coset
    $
    \Delta+(1,0,\ldots,0)=\{(i_1,\ldots,i_k)\in\mathbb
    Z_2^n\mid\sum_{i=1}^ki_j=1\}
    $
    If $v\not\in T$, we
    let 
    $
    C_v:=\big(z_1,\ldots,z_k,\Delta),
    $
    and if $v\in T$ we let $C_v:=\big(z_1,\ldots,z_k,\Delta+(1,0,\ldots,0))$.
  \end{itemize}
  Observe that $\TS$ is a set of Boolean constraints, all
  of them
  linear equations over the field $\mathbb F_2$; they are known as the
  \emph{Tseitin tautologies} associated with $H$ and $T$. We think of
  assigning a ``charge'' $1$ to every vertex in $T$ and charge $0$ to
  all remaining vertices. Now we are looking for a set $F\subseteq
  E(H)$ of edges such that for every vertex $v$, the number of edges
  in $F$ incident with $v$ is congruent to the charge of $v$ modulo
  $2$. A simple double counting argument shows that $\TS$ is
  unsatisfiable if $|T|$ is odd. (The sum of degrees in
  the graph $(V(H),F)$ is even and, by construction, of the same parity as the sum $|T|$
  of the charges, which is odd.)

    It turns out that the graphs $G(\TS)$ and $\tilde G(\TS)$ are
  precisely the \emph{CFI-graphs} defined from $H$ with all
  vertices in $T$ ``twisted''. These graphs have been introduced by
  Cai, Fürer, and Immerman~\cite{caifurimm92} to prove lower bounds
  for the Weisfeiler-Lehman algorithm and have found various other
  applications in finite model theory since then.
\end{example}

\subsection{Low-Degree Reduction From Tseitin to Isomorphism}
For every graph $H$ and set $T\subseteq V(H)$, we let
${\PTs}(H,T)$ be the following system of
polynomial equations:
\begin{align}
  \label{eq:ts1}
  z_e^2-1&=0&&\text{for all }e\in E(H),\\
  \label{eq:ts2}
  1+z_{e_1}z_{e_2}\cdots z_{e_k}&=0&&\text{for all }v\in T\text{  with incident edges }e_1,\ldots,e_k,\\
  \label{eq:ts3}
  1-z_{e_1}z_{e_2}\cdots z_{e_k}&=0&&\text{for all }v\in V(H)\setminus
  T\text{  with incident edges }e_1,\ldots,e_k.
\end{align}
Observe that for every field $\mathbb F$ of characteristic $\neq 2$
there is a one-to-one correspondence between solutions to the system
${\PTs}(H,T)$ over $\mathbb F$ and solutions for the
CSP-instance $\TS(H,T)$ (see Example~\ref{exa:tseitin}) via the
``Fourier'' correspondence $1\mapsto0,-1\mapsto 1$.

\begin{lemma}\label{lem:tsei2iso}
  Let $\mathbb F$ be a field of characteristic $\neq 2$. Let $k\ge 2$
  be even and $H$ a $k$-regular graph, and let $T\subseteq V(H)$. Let
  $G:=G(\TS(H,T))$ and $\tilde G:=\tilde G(\TS(H,T))$. 

  Then there is a degree-$(k,2k)$ reduction from ${\PTs}(H,T)$
  to $\Piso(G,H)$.
\end{lemma}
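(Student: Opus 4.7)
My plan is to define the reduction polynomials $f_{v,w}$ via a Fourier-style encoding that turns a Tseitin solution $z\in\{-1,+1\}^{E(H)}$ into the ``shift'' isomorphism $\iota_z\colon G\to\tilde G$ sending $\gamma^{(e)}\mapsto(\gamma+[z_e=-1])^{(e)}$ and $\beta^{(C_u)}\mapsto(\beta+\bar z_u)^{(C_u)}$, where $\bar z_u\in\mathbb Z_2^k$ collects the $\mathbb Z_2$-versions of the flips at the edges incident to $u$. Explicitly, I would set $f_{v,w}:=0$ whenever $v$ and $w$ carry different colours, and otherwise
\begin{align*}
f_{\gamma^{(e)},\delta^{(e)}}&:=\tfrac{1}{2}\bigl(1+(-1)^{\gamma+\delta}z_e\bigr),\\
f_{\beta^{(C_u)},\beta'^{(\tilde C_u)}}&:=\tfrac{1}{2^k}\prod_{i=1}^{k}\bigl(1+(-1)^{\beta_i+\beta'_i}z_{e_i}\bigr),
\end{align*}
where $e_1,\ldots,e_k$ enumerate the edges of $H$ incident to $u$. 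These have degrees $1$ and $k$ respectively, matching $d_1=k$.

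For the booleanity derivation $f^2-f$, the key identity is $(1+az)^2\equiv 2(1+az)\pmod{z^2-1}$ for $a\in\{-1,+1\}$, a direct consequence of axiom \eqref{eq:ts1}. For the constraint-type polynomial, applying this factor by factor yields $f^2=\tfrac{2^k}{4^k}\prod_i(1+(-1)^{\beta_i+\beta'_i}z_{e_i})=f$; the intermediate polynomials have degree at most $2k$, and dividing by $2^k$ is legal since $\operatorname{char}\FF\neq 2$.

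For the substituted $\Piso(G,\tilde G)$-axioms I would split by case. The row and column sums \eqref{eq:cont1}, \eqref{eq:cont2} at an edge vertex collapse to $\tfrac{1+z_e}{2}+\tfrac{1-z_e}{2}=1$. At a constraint vertex $u$, a Fourier expansion of $\sum_\beta f_{\beta^{(C_u)},\beta'^{(\tilde C_u)}}$ using the orthogonality relation
\[
\sum_{\beta\in\Delta}(-1)^{\sum_{i\in S}\beta_i}=2^{k-1}\bigl([S=\emptyset]+[S=[k]]\bigr)
\]
collapses the sum, \emph{as a polynomial identity} (all intermediate terms are already multilinear in the $z_{e_i}$), to $\tfrac{1}{2}(1+\prod_i z_{e_i})$ if $u\notin T$ and to $\tfrac{1}{2}(1-\prod_i z_{e_i})$ if $u\in T$; subtracting $1$ gives, up to the scalar $-\tfrac{1}{2}$, exactly the Tseitin axiom \eqref{eq:ts3} or \eqref{eq:ts2}, so the derivation sits in degree $k$. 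For the non-local-isomorphism axioms \eqref{eq:comp}, I would analyse the product $f_{v,w}\cdot f_{v',w'}$ by vertex type; in every subcase where the pair is not a local isomorphism there is an edge $e$ of $H$ at which two linear factors $(1+az_e)(1-az_e)=1-z_e^2$ appear, exhibiting the product as a scalar multiple of a monomial times the Tseitin axiom \eqref{eq:ts1} and hence derivable in degree $2k$.

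The main obstacle is the case analysis for \eqref{eq:comp}: one must enumerate all the ways two pairs can fail to form a local isomorphism—failure of being a function, of injectivity, or of adjacency preservation—across the edge/edge, edge/constraint, and constraint/constraint subcases, and in each exhibit the cancelling pair of factors at some shared edge $e$. The constraint/constraint subcase at a common $H$-vertex (where injectivity fails while both sides share the full edge-tuple) is what realises the full degree $2k$ and pins down the second parameter of the reduction.
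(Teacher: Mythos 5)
Your proposal is correct and follows essentially the same route as the paper's proof: identical reduction polynomials (up to a harmless sign convention in the factors $\tfrac12(1\pm\zeta\eta z_e)$), the same telescoping use of $z_e^2-1$ for the booleanity derivations, the same collapse of the constraint-vertex row sums to $\pm\tfrac12$ times a Tseitin axiom, and the same cancellation $(1+az_e)(1-az_e)=1-z_e^2$ driving the case analysis for \eqref{eq:comp}. The only cosmetic difference is that you evaluate the coset sum by Fourier orthogonality where the paper proves the same identity by induction on the number of factors.
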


\begin{proof}
Let us first simplify the notation. We let $\TS:=\TS(H,T)$ and ${\PTs}:={\PTs}(H,P)$
  and ${\Piso}:={\Piso}(H,P)$

We
denote the vertices of $H$ by $t,u$, the vertices of $G$ by $v,w$ and
the vertices of $\tilde G$ by $\tilde v,\tilde w$. It will be
convenient to view the CSP $\TS$ as a $\Gamma$-CSP for the
multiplicative group $\Gamma=(\{1,-1\},\cdot)$. Then the constraint
$C_t$ associated with vertex $t$ is
$C_t=((z_{e_1},\ldots,z_{e_k}),{Z}_t)$, where 
\[
{Z}_t=
\begin{cases}
  \big\{(\zeta_1,\ldots,\zeta_k)\in\{1,-1\}^k\bigmid\prod_{i=1}^k\zeta_i=-1\big\}&\text{if
  }t\in T,\\
  \big\{(\zeta_1,\ldots,\zeta_k)\in\{1,-1\}^k\bigmid\prod_{i=1}^k\zeta_i=1\big\}&\text{if
  }t\in V(H)\setminus T.  
\end{cases}
\]
We let 
\[
\tilde
Z_t=\big\{(\zeta_1,\ldots,\zeta_k)\in\{1,-1\}^k\bigmid\prod_{i=1}^k\zeta_i=1\big\}
\]
for all $t\in V(H)$. Note that $\tilde Z_t={Z}_t$ precisely for the $t\in V(H)\setminus T$.

The graphs $G$ and $\tilde G$ have vertices $1^{(z_e)}$ and
$-1^{(z_e)}$ for $e\in E(H)$, which in the following we denote by
$1^{(e)}$ and $-1^{(e)}$. Furthermore, the
graphs have vertices $\bzeta^{(C_t)}$ for all $t\in V(H)$ and
$\bzeta\in{Z}_t$ or $\bzeta\in\tilde Z_t$, which in the
following we denote by $\bzeta^{(t)}$. If $\bzeta=(\zeta_1,\ldots,\zeta_k)$ and
$t$ is incident with $e_1,\ldots,e_k$, for
all $i\in[k]$ we have an $M^{(i)}$-coloured edge
$\bzeta^{(t)}\zeta_i^{(e_i)}$. 

To avoid excessive indexing, we write $z(e)$ instead of $z_e$ and
$x(v,\tilde v)$ instead of $y
x_{v\tilde v}$ to denote the variables of
the polynomials in ${\PTs}(H,T)$ and ${\Piso}(G,H)$. 

Now we are ready to define the reduction.
  Let us first define the polynomials $f_x$ for the variables $x$ of
  ${\Piso}$.
  \begin{itemize}
  \item If $x=x(\zeta^{(e)},\eta^{(e)})$, for some $e\in E(H)$ and $\zeta,\eta\in\{1,-1\}$, we let
    \[
    f_{x}(z(e))=\frac{1}{2}\big(1-\zeta\eta z(e)\big).
    \]
  \item If $x=x(\bzeta^{(t)},\beta^{(t)})$ for some $t\in V(H)$, incident with edges $e_1,\ldots,e_k\in
    E(H)$, and 
    $\bzeta=(\zeta_1,\ldots,\zeta_k)\in{Z}_t,\bseta=(\eta_1,\ldots,\eta_k)\in\tilde{Z}_t$
    we let
    \[
    f_{x}(z(e_1),\ldots,z(e_k))=\prod_{i=1}^k\frac{1}{2}\big(1-\zeta_i\eta_i z(e_i)\big).
    \]
  \item For all other variables $x$, we let $f_x=0$.
  \end{itemize}
  Now we need to prove that   for each polynomial
  $q(x_1,\ldots,x_\ell)\in{\Piso}\cup\SQ$ the polynomial $
  q\big(f_{x_1}(\bar z),\ldots,f_{x_\ell}(\bar z)\big)$ has a degree
  $2k$ derivation from ${\PTs}$.

  \paragraph*{The polynomials $x(v,\tilde v)^2-x(v,\tilde v)$ for $v\in V(G),\tilde v\in
    V(\tilde G)$}~\\
  We only have to consider the cases 
  \begin{eroman}
    \item $v=\zeta^{(e)}$, $\tilde v=\eta^{(e)}$ for some $e \in E(H)$ and
  $\zeta,\eta\in\{-1,1\}$, 
  \item $v=\boldsymbol\zeta^{(t)}$, $\tilde v=\eta^{(t)})$ for some
    $t\in V(H)$ and
  $\bzeta\in Z_t$, $\bseta\in\tilde Z_t$.
  \end{eroman}
  For all other pairs $v\in V(G),\tilde v\in V(\tilde G)$ we have
  $f_{x(v,\tilde v)}=0$ and thus $f_{x(v,\tilde v)}^2-f_{x(v,\tilde v)}=0$,
  which makes it trivially derivable in the polynomial calculus.

  Let us consider case (i) first. We have
  \begin{align}
    \notag
  f_{x(v,\tilde v)}(z(e))^2-f_{x(v,\tilde v)}(z(e))
  &=\frac{1}{4}-\frac{1}{2}\zeta\eta
  z(e)+\frac{1}{4}\zeta^2\eta^2z(e)^2-\frac{1}{2}+\frac{1}{2}\zeta\eta
  z(e)\\
  \label{eq:tsiso1}
  &=\frac{1}{4}\big(z(e)^2-1\big),
  \end{align}
  where the last equality holds because $\zeta^2=\eta^2=1$. As
  $z(e)^2-1$ is in ${\PTs}$, this gives us a trivial degree-$2$
  derivation of $f_{x(v,\tilde v)}(z(e))^2-f_{x(v,\tilde v)}(z(e))$.

  Let us now consider case (ii). Suppose that $t$ is incident with the
  edges $e_1,\ldots,e_k$ and that
  $\bzeta=(\zeta_1,\ldots,\zeta_k)$ and
  $\bseta=(\eta_1,\ldots,\eta_k)$. For $i\in[k]$, we let $z_i=z(e_i)$
  and $f_i(z_i)=\frac{1}{2}(1-\zeta_i\eta_i z_i)$. As in
  \eqref{eq:tsiso1},
  \begin{equation}
    \label{eq:tsiso2}
    f_i(z_i)^2-f_i(z_i)=\frac{1}{4}(z_i^2-1).
  \end{equation}
  We have
  \[
    f_{x(v,\tilde v)}(z_1,\ldots,z_k)^2- f_{x(v,\tilde
      v)}(z_1,\ldots,z_k)^2
    =\prod_{i=1}^kf_i(z_i)^2-\prod_{i=1}^kp(z_i)
  \]
  We prove that we can derive
  $\prod_{i=1}^jf_i(z_i)^2-\prod_{i=1}^jp(z_i)$ by induction on
  $j$. For $j=1$, this follows from \eqref{eq:tsiso2}. For the
  inductive step $j-1\to j$, we write
  \begin{align*}
  &\prod_{i=1}^jf_i(z_i)^2-\prod_{i=1}^jf_i(z_i)\\
  =\,&f_j(z_j)^2\prod_{i=1}^{j-1}f_i(z_i)^2-f_j(z_j)\prod_{i=1}^{j-1}f_i(z_i)^2+f_j(z_j)\prod_{i=1}^{j-1}f_i(z_i)^2-f_j(z_j)
  \prod_{i=1}^{j-1}f_i(z_i)\\
  =\,&(f_j(z_j)^2-f_j(z_j))\prod_{i=1}^{j-1}f_i(z_i)^2+f_j(z_j)\Big(\prod_{i=1}^{j-1}f_i(z_i)^2-\prod_{i=1}^{j-1}f_i(z_i)\Big).
  \end{align*}
  If follows from \eqref{eq:tsiso2} that $(f_j(z_j)^2-f_j(z_j))$ can
  be derived. Thus $(f_j(z_j)^2-f_j(z_j))\prod_{i=1}^{j-1}f_i(z_i)^2$
  can be derived as well. It follows from the induction hypothesis
  that $\prod_{i=1}^{j-1}f_i(z_i)^2-\prod_{i=1}^{j-1}f_i(z_i)$ can be
  derived. Thus
  $f_j(z_j)\Big(\prod_{i=1}^{j-1}f_i(z_i)^2-\prod_{i=1}^{j-1}f_i(z_i)\Big)$
  can be derived as well, which implies that
  $\prod_{i=1}^jf_i(z_i)^2-\prod_{i=1}^jf_i(z_i)$ can be derived. As
  none of the polynomials involved in these derivations has degree
  greater than $2k$, this gives us a degree-$2k$ derivation.

  \paragraph*{The polynomials $\sum_{\tilde v\in V(\tilde G)}x(v,\tilde
    v)-1$ 
    for $v\in V(G)$}~\\
  Suppose first that $v=\zeta^{(e)}$ for some $e\in E(H)$ and
  $\zeta\in\{1,-1\}$, and let $\eta=-\zeta$. Then
    \[
    \sum_{\tilde v\in V(\tilde G)}\hspace{-0.8em}f_{x(v,\tilde v)}(\bar z)-1
    =f_{\zeta^{(e)},\zeta^{(e)}}(z(e))+f_{\zeta^{(e)},\eta^{(e)}}(z(e))-1
    =-\frac{1}{2}\zeta^2z(e)-\frac{1}{2}\zeta\eta z(e)=0,
    \]
    which is trivially derivable.

    Suppose next that $v=\bzeta^{(t)}$ for some $t\in V(H)$ and
    $\bzeta=(\zeta_1,\ldots,\zeta_k)\in Z_t$. Suppose
    that $t$ is incident with edges $e_1,\ldots,e_k\in E(H)$, and let
    $z_i=z(e_i)$. Then
    \begin{align}
      \notag
    \sum_{\tilde v\in V(\tilde G)}f_{x(v,\tilde v)}(\bar z)-1
    &=\sum_{\bseta\in\tilde Z_t}
    f_{x(\bzeta^{(t)},\bseta^{(t)})}(z_1,\ldots,z_k)-1\\
    &\label{eq:tsiso3}
    =-1+\frac{1}{2^{k}}\sum_{(\eta_1,\ldots,\eta_k)\in\tilde Z_t}
    \prod_{i=1}^k(1-\zeta_i\eta_iz_i)
    \end{align}
    Observe that 
    \begin{equation}\label{eq:tsiso4}
    \sum_{(\eta_1,\ldots,\eta_k)\in\tilde Z_t}
    \prod_{i=1}^k(1-\zeta_i\eta_iz_i)
    =\sum_{(\eta_1,\ldots,\eta_k)\in Z_t}
    \prod_{i=1}^k(1-\eta_iz_i)
    \end{equation}

    \begin{claim}
      Let $\ell\ge 1$, $\epsilon\in\{1,-1\}$, and
      $Z({\ell},\epsilon)=\big\{(\eta_1,\ldots,\eta_{\ell})\bigmid\prod_{i=1}^{\ell}\eta_i=\epsilon\big\}$. Then
      \[
      \sum_{(\eta_1,\ldots,\eta_{\ell})\in Z({\ell},\epsilon)}
      \prod_{i=1}^{\ell}(1-\eta_iz_i)=2^{{\ell}-1}\Big(1+(-1)^{\ell}\epsilon\prod_{i=1}^{\ell}z_i\Big).
      \]
      \proof
      We prove the claim by induction on ${\ell}$. For ${\ell}=1$ it is
      trivial. For the inductive step ${\ell}-1\to {\ell}$, we write
      \begin{align*}
      &\sum_{(\eta_1,\ldots,\eta_{\ell})\in Z({\ell},\epsilon)}
      \prod_{i=1}^{\ell}(1-\eta_iz_i)\\
      =\,&(1-z_{\ell})\hspace{-2em}\sum_{(\eta_1,\ldots,\eta_{{\ell}-1})\in Z({\ell}-1,\epsilon)}
      \prod_{i=1}^{{\ell}-1}(1-\eta_iz_i)
      +(1+z_{\ell}) \hspace{-2em}\sum_{(\eta_1,\ldots,\eta_{{\ell}-1})\in Z({\ell}-1,-\epsilon)}
      \prod_{i=1}^{{\ell}-1}(1-\eta_iz_i)\\
      =\,&2^{{\ell}-2}\Big((1-z_{\ell})\big(1+(-1)^{{\ell}-1}\epsilon\prod_{i=1}^{{\ell}-1}z_i\big)+(1+z_{\ell})\big(1-(-1)^{{\ell}-1}\epsilon\prod_{i=1}^{{\ell}-1}z_i\big)\Big)\\
      =\,&2^{{\ell}-2}\Big(2+2(-1)^{\ell}\epsilon\prod_{i=1}^{\ell}z_i\Big)\\
      =\,&2^{{\ell}-1}\Big(1+(-1)^{\ell}\epsilon\prod_{i=1}^{\ell}z_i\Big)
      \uenda
      \end{align*}
    \end{claim}

    Let $\epsilon_t=-1$ if $t\in T$ and $\epsilon_t=1$ if $t\not\in
    T$. As $k$ is even, by the claim we have 
    \[
    \sum_{(\eta_1,\ldots,\eta_k)\in Z_t}
    \prod_{i=1}^k(1-\eta_iz_i)=2^{k-1}\Big(1+\epsilon_t\prod_{i=1}^kz_i\Big).
    \]
    Then from \eqref{eq:tsiso3} and
    \eqref{eq:tsiso4} we obtain
        \begin{equation}\label{eq:tsiso5}
    \sum_{\tilde v\in V(\tilde G)}f_{x(v,\tilde v)}(\bar
    z)-1=-1+\frac{1}{2}\Big(1+\epsilon_t\prod_{i=1}^kz_i\Big)=-\frac{1}{2}\Big(1-\epsilon_t\prod_{i=1}^kz_i\Big).
    \end{equation}
    As $1-\epsilon_t\prod_{i=1}^kz_i\in{\PTs}$, this polynomial is derivable.

  \paragraph*{The polynomials $\sum_{v\in V(G)}x(v,\tilde
    v)-1$ 
    for $\tilde v\in V(\tilde G)$}~\\
  This case is symmetric to the previous one. 

  \paragraph*{The polynomials $x(v_1,\tilde v_1)x(v_2,\tilde v_2)$ for
    $v_1,v_2\in V(G), \tilde v_1,\tilde v_2\in V(\tilde G)$ such that
    $\{(v_1,\tilde v_1),(v_2,\tilde v_2)\}$ is not a local isomorphism}~\\
  As $f_{x(v_,\tilde v_i)}=0$ unless $v_i=\zeta^{(e)}$ and
  $\tilde v_i=\eta^{(e)}$ $e\in E(H)$ and $\zeta,\eta\in\{-1,1\}$, or
  $v_i=\bzeta^{(t)}$ and $\tilde v_i=\bseta^{(t)}$ for some
  $t\in V(H)$ and $\bzeta\in Z_t,\bseta\in\tilde Z_t$, we assume that
  this is the case for $i=1,2$. In the former
  case we them \emph{$e$-vertices} and in the latter
  \emph{$t$-vertices}. In order for the mapping
  $v_1v_2\mapsto\tilde v_1\tilde v_2$ to be no local isomorphism,
  the following may happen:
  \begin{eroman}
    \item $v_1=v_2$ and $\tilde v_1\neq \tilde v_2$;
    \item $v_1\neq v_2$ and $\tilde v_1=\tilde v_2$;
    \item $v_1v_2\in E(G)$ and $\tilde v_1\tilde v_2\not\in E(H)$;
    \item $v_1v_2\not\in E(G)$ and $\tilde v_1\tilde v_2\in E(H)$.
  \end{eroman}
  By symmetry, it suffices to consider cases (i) and (iii).

  In case (i), assume first that $v_1,\tilde v_1$ are
  $e$-vertices. Then $v_2=v_1,\tilde v_2$ are $e$-vertices as
  well. Say, $v_1=v_2=\zeta^{e}$ and $\tilde v_1=\eta_1^{(e)}, \tilde
  v_2=\eta_2^{(e)}$ for $\eta_1\neq\eta_2$. Then
  \begin{align*}
  f_{x(v_1,\tilde v_1)}f_{x(v_2,\tilde v_2)}&=\frac{1}{4}(1-\zeta^2
  z(e))(1-\eta_1\eta_2
  z(e))\\
    &=\frac{1}{4}(1-z(e))(1+z(e))=\frac{1}{4}(1-z(e)^2).
  \end{align*}
  As $z(e)^2-1\in{\PTs}$, this polynomial is derivable.
  
  Assume next that $v_1,\tilde v_1$ are
  $t$-vertices. Then $v_2=v_1,\tilde v_2$ are $t$-vertices as
  well. Say, $v_1=v_2=\bzeta^{(t)}$ for some
  $\bzeta=(\zeta_1,\ldots,\zeta_k)\in Z_t\cap\tilde Z_t$ and $\tilde v_i=\bzeta_i^{(t)}$ for some
  $\bseta_i=(\eta_{i1},\ldots,\eta_{ik})\in \tilde Z_t$. Then
  $\bseta_1\neq\bseta_2$. Say, $\eta_{1k}\neq\eta_{2k}$. For all $j\in[k]$, let
  $z_j=z(e_j)$. We have
  \begin{align*}
  f_{x(v_1,\tilde v_1)}(\bar z)\cdot f_{x(v_2,\tilde v_2)}(\bar z)
  &=\Big(\prod_{j=1}^k\frac{1}{2}(1-\zeta_i^2 z_j)\Big)\cdot \Big(\prod_{j=1}^k\frac{1}{2}(1-\eta_{1j}\eta_{2j} z_j)\Big)\\
  &=\frac{1}{2^{2k}}\Big(\prod_{j=1}^{k-1}(1-\zeta_j^2z_j)(1-\eta_{1j}\eta_{2j}z_j)\Big)
    (1-z_k)(1+z_k)\\
  &=\frac{1}{2^{2k}}\Big(\prod_{j=1}^{k-1}(1-\zeta_j^2z_j)(1-\eta_{1j}\eta_{2j}z_j)\Big)
    (1-z_k^2).
  \end{align*}
  As $z_k^2-1\in{\PTs}$, this polynomial is derivable.

  In case (iii), 
  $v_i,\tilde v_i$ must be $t$-vertices and $v_{3-i},\tilde v_{3-i}$
  must be $e$-vertices for some $i\in[2]$ and $t\in V(H), e\in E(H)$
  such that $e$ is incident with $t$, because otherwise there will be
  no edges between either $v_1$ and $v_2$ or $\tilde v_1$ and
  $\tilde v_2$.

  Say, $v_1=\bzeta^{(t)}, \tilde v_1=\bseta^{(t)}$ for some $t\in
  V(H)$ and $\bzeta=(\zeta_1,\ldots,\zeta_k)\in Z_t$,
  $\bseta=(\eta_1,\ldots,\eta_k)\in \tilde Z_t$ and
  $v_2=\zeta^{(e)},\tilde v_2=\eta^{(e)}$ for some $e\in E(H)$
  incident with $v$ and $\zeta,\eta\in\{-1,1\}$. Let $e_1,\ldots,e_k$
  be the edges incident with $t$, and assume that $e=e_k$. 
As $v_1v_2\in E(G)$, we have $\zeta=\zeta_k$, and as $\tilde v_1\tilde v_2\not\in
  E(\tilde G)$,
  we have $\eta\neq\eta_k$. This implies
  $\zeta\eta\neq\zeta_k\eta_k$. For all $j\in[k]$, let
  $z_j=z(e_j)$. We have
  \begin{align*}
  f_{x(v_1,\tilde v_1)}(\bar z)\cdot f_{x(v_2,\tilde v_2)}(\bar z)
  &=\Big(\prod_{i=1}^k\frac{1}{2}(1-\zeta_i\eta_i
  z_i)\Big)\cdot\frac{1}{2}(1-\zeta\eta z_k)\\
  &=\frac{1}{2^{k+1}}\prod_{i=1}^{k-1}(1-\zeta_i\eta_i
  z_i)(1-z_k)(1+z_k)\\
  &=-\frac{1}{2^{k+1}}\prod_{i=1}^{k-1}(1-\zeta_i\eta_i
  z_i)(z_k^2-1).
  \end{align*}
  As $z_k^2-1\in{\PTs}$, this polynomial is derivable.
\end{proof}

\section{Lower Bounds}

We obtain our lower bounds combining the low-degree reduction of the
previous section with known lower bounds for Tseitin
polynomials due to Buss et al. \cite{Buss.2001} for polynomial calculus and Grigoriev \cite{Grigoriev.2001a} for the Positivstellensatz calculus.

\begin{theorem}[\cite{Buss.2001,Grigoriev.2001a}]\label{thm:Tseitin}
  For every $n\in\mathbb N$ there is a 
  6-regular graph $H_n$ of size $O(n)$ such that $\PTs(H_n,V(H_n))$ is
  unsatisfiable, but:
\begin{enumerate}
  \item there is no degree-$n$ polynomial calculus refutation of
    $\PTs(H_n,V(H_n))$ over any field $\FF$ of characteristic $\neq 2$;
  \item there is no degree-$n$ Positivstellensatz calculus refutation of $\PTs(H_n,V(H_n))$ over the reals.
\end{enumerate}
\end{theorem}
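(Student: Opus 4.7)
The plan is to invoke the cited Tseitin-tautology lower bounds directly. I would fix $H_n$ to be an explicit family of $6$-regular expander graphs on $\Theta(n)$ vertices with edge expansion bounded below by a positive absolute constant; such families exist (for instance, suitable Ramanujan graphs or random $6$-regular graphs). I would also arrange that $|V(H_n)|$ is odd, which by the parity argument in Example~\ref{exa:tseitin} forces $\TS(H_n, V(H_n))$ to be unsatisfiable. Via the Fourier correspondence $1 \leftrightarrow 0$, $-1 \leftrightarrow 1$ noted just before Lemma~\ref{lem:tsei2iso}, it follows that $\PTs(H_n, V(H_n))$ is unsatisfiable over any field of characteristic different from $2$.

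For part (1), I would invoke the main theorem of Buss et al.~\cite{Buss.2001}: on Tseitin tautologies over a $d$-regular graph with edge expansion bounded below, every polynomial calculus refutation over a field of characteristic $\neq 2$ has degree $\Omega(|V(H)|)$ (with the hidden constant depending on $d$ and the expansion). Applied to the family $H_n$, this yields a refutation-degree lower bound of $\Omega(|V(H_n)|)$, and by choosing the constant in $|V(H_n)| = \Theta(n)$ sufficiently large, this lower bound exceeds $n$.

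For part (2), I would appeal analogously to Grigoriev's theorem~\cite{Grigoriev.2001a}, which proves the same linear degree lower bound for the Positivstellensatz calculus (equivalently Sum-of-Squares) over the reals on Tseitin tautologies of expander graphs.

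The only point requiring care---rather than a real obstacle---is verifying that the polynomial system $\PTs$ defined in this paper coincides with the polynomial encoding used in the cited works. Under the Fourier substitution $x \in \{0,1\} \leftrightarrow z = 1 - 2x \in \{1,-1\}$, a parity constraint $x_{e_1} \oplus \cdots \oplus x_{e_k} = b$ transforms into the multiplicative equation $z_{e_1}\cdots z_{e_k} = (-1)^b$, which is precisely \eqref{eq:ts2} or \eqref{eq:ts3}, while the Boolean axioms $x_e^2 - x_e = 0$ correspond to $z_e^2 - 1 = 0$, i.e., \eqref{eq:ts1}. The substitution preserves degree, so the cited degree lower bounds apply to $\PTs(H_n, V(H_n))$ verbatim.
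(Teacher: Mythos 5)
Your proposal is correct and matches the paper's treatment: the paper states this theorem as an import from \cite{Buss.2001,Grigoriev.2001a} without proof, and your reconstruction---odd-order $6$-regular expanders, unsatisfiability via the parity/double-counting argument for $|T|=|V(H_n)|$ odd, and the degree-preserving Fourier correspondence identifying $\PTs$ with the multiplicative Tseitin encoding used in the cited lower bounds---is exactly the intended derivation. The one point you rightly flag, that the encoding in $\PTs$ coincides with the $\{1,-1\}$-encoding of the cited works, is also the only point the paper itself addresses (in the remark preceding Lemma~\ref{lem:tsei2iso}).
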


Now our main lower bound theorem reads as follows.

\begin{theorem}\label{thm:main}
  For every $n\in\mathbb N$ there are non-isomorphic graphs $G_n$, $\tilde G_n$ of
  size $O(n)$, such that
\begin{enumerate}
  \item there is no degree-$n$ polynomial calculus refutation of $\Piso(G_n,\tilde G_n)$ over any field $\FF$ of characteristic $\neq 2$; 
  \item there is no degree-$n$ Positivstellensatz calculus refutation of $\Piso(G_n,\tilde G_n)$ over the reals.
\end{enumerate}
\end{theorem}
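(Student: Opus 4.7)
The plan is to combine Theorem~\ref{thm:Tseitin} with the low-degree reduction of Lemma~\ref{lem:tsei2iso} and the composition principle of Lemma~\ref{lem:lowdeg}. Conceptually, once all of the machinery is in place, the argument is essentially a short parameter chase.

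First, I would fix the Tseitin instance. Take $H$ to be the graph $H_{6n}$ from Theorem~\ref{thm:Tseitin}, which is $6$-regular, has size $O(n)$, and admits no degree-$6n$ PC or Positivstellensatz calculus refutation of $\PTs(H,V(H))$. Define $G_n := G(\TS(H,V(H)))$ and $\tilde G_n := \tilde G(\TS(H,V(H)))$; these are the CFI-graphs from Example~\ref{exa:tseitin}. Because $H$ is $6$-regular, each edge of $H$ contributes $2$ vertices and each vertex of $H$ contributes $2^{5}$ vertices to $G_n$ (and similarly to $\tilde G_n$), so both graphs have $O(n)$ vertices. Since $\PTs(H,V(H))$ is unsatisfiable over every field of characteristic $\neq 2$, the CSP-instance $\TS(H,V(H))$ is unsatisfiable via the Fourier correspondence $1\mapsto 0,\,-1\mapsto 1$, and hence $G_n$ and $\tilde G_n$ are non-isomorphic by Lemma~\ref{lem:isomorphicCSP}.

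For part~(1), suppose for contradiction that $\Piso(G_n,\tilde G_n)$ admits a degree-$n$ PC refutation over some field $\mathbb F$ of characteristic $\neq 2$. Since $H$ is $6$-regular, Lemma~\ref{lem:tsei2iso} supplies a degree-$(6,12)$ reduction from $\PTs(H,V(H))$ to $\Piso(G_n,\tilde G_n)$, and Lemma~\ref{lem:lowdeg} then produces a PC refutation of $\PTs(H,V(H))$ over $\mathbb F$ of degree $\max(12,6n) = 6n$ (for $n\ge 2$), contradicting Theorem~\ref{thm:Tseitin}(1).

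For part~(2), the same plan works provided the composition principle extends to the Positivstellensatz calculus, which it does essentially for free. A Positivstellensatz calculus refutation of $\Piso(G_n,\tilde G_n)$ is by definition a PC derivation over $\mathbb R$ of a polynomial $1+\sum_i h_i^2$. Substituting each variable $y$ by the reduction polynomial $f_y(\bar x)$ from Lemma~\ref{lem:tsei2iso} transforms the whole derivation into a PC derivation (in the variables of $\PTs(H,V(H))$) of $1+\sum_i h_i(\bar f)^2$, which is again of Positivstellensatz form because $h_i(\bar f)^2$ is still a square. The degree bound is the same as in Lemma~\ref{lem:lowdeg}, so a degree-$n$ Positivstellensatz refutation of $\Piso(G_n,\tilde G_n)$ would yield a degree-$6n$ Positivstellensatz refutation of $\PTs(H,V(H))$, contradicting Theorem~\ref{thm:Tseitin}(2). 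The one point in the argument that is not entirely routine is precisely this verification that the reduction lifts from PC to the Positivstellensatz calculus; everything else is bookkeeping.
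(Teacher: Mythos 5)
Your proof is correct and follows exactly the paper's approach: the paper's own proof of Theorem~\ref{thm:main} is the one-line observation that it follows from Lemmas~\ref{lem:lowdeg} and~\ref{lem:tsei2iso} and Theorem~\ref{thm:Tseitin}, which is precisely the combination you carry out, including the correct parameter chase ($k=6$, degree-$(6,12)$ reduction, instance $H_{6n}$). Your explicit check that the substitution lifts a Positivstellensatz calculus refutation to one of the same form (since $h_i(\bar f)^2$ remains a square) is a detail the paper leaves implicit, and you handle it correctly.
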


\begin{proof}
  This follows from Lemmas~\ref{lem:lowdeg} and \ref{lem:tsei2iso} and Theorem~\ref{thm:Tseitin}.
\end{proof}

It follows that over finite fields, polynomial calculus has similar shortcomings than over fields of characteristic 0. 
However, a remarkable exception is $\FF_2$, where we are not able to prove linear lower bounds on the degree.
Here the approach to reduce from Tseitin fails, as the Tseitin Tautologies are satisfiable over $\FF_2$.
As a matter of fact, the next theorem shows that CFI-graphs can be
distinguished with Nullstellensatz of degree $2$ over $\FF_2$.

\begin{theorem}\label{thm:F2_solves_CFI}
  Let $H$ be a graph $T\subseteq V(H)$ such that $|T|$ is odd.  Then
  there is a degree-2 Nullstellensatz refutation over $\FF_2$ of
  $\Piso(G,\tilde G)$, where $G=G(\TS(H,T))$ and $\tilde G=\tilde
  G(\TS(H,T))$.
\end{theorem}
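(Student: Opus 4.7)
The plan is to construct an explicit degree-$2$ Nullstellensatz refutation of $\Piso(G,\tilde G)$ over $\FF_2$, exploiting the fact that the CFI construction encodes a $\mathbb Z_2$-linear Tseitin system whose unsatisfiability (for odd $|T|$) is witnessed by a parity count on $H$. For each edge $e\in E(H)$ I would set $y_e:=x_{0^{(e)},0^{(e)}}$, and for each $t\in V(H)$ with incident edges $e_1,\ldots,e_k$ I would derive the ``Tseitin equation''
\[
S_t \;:=\; \sum_{i=1}^{k}y_{e_i} \;+\; \chi_T(t) \;+\; k \;=\; 0 \quad\text{in } \FF_2,
\]
where $\chi_T(t)=1$ if $t\in T$ and $0$ otherwise, as a degree-$2$ combination of the axioms of $\Piso(G,\tilde G)$. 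Summing these combinations over $t$, each $y_e$ is counted twice (once per endpoint of $e$) and cancels modulo $2$, leaving the scalar $|T|+2|E(H)|\equiv 1\pmod 2$. The resulting degree-$2$ NS combination then equals the constant $1$, i.e.\ a refutation.

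To derive $S_t = 0$ in degree two, I would fix the zero coset representative $\bseta_0:=(0,\ldots,0)\in\tilde Z_t$. For each $\bzeta=(\zeta_1,\ldots,\zeta_k)\in Z_t$ and each $i\in[k]$, multiplying axiom \eqref{eq:cont2} at the edge-vertex $\zeta_i^{(e_i)}$ by $x_{\bzeta^{(t)},\bseta_0^{(t)}}$ and using the non-local-isomorphism axioms \eqref{eq:comp} to kill every $\tilde w$-summand except $\tilde w = 0^{(e_i)}$ (this works because the $M^{(i)}$-neighbour of $\bseta_0^{(t)}$ in $\tilde G$ is exactly $0^{(e_i)}$, while that of $\bzeta^{(t)}$ in $G$ is $\zeta_i^{(e_i)}$) yields $x_{\bzeta^{(t)},\bseta_0^{(t)}}\cdot x_{\zeta_i^{(e_i)},0^{(e_i)}} = x_{\bzeta^{(t)},\bseta_0^{(t)}}$ in degree $2$. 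Combining this with the analogous derivation from \eqref{eq:cont1} at $0^{(e_i)}$ multiplied by the same $x_{\bzeta^{(t)},\bseta_0^{(t)}}$ (and again killing colour-mismatched cross-terms via \eqref{eq:comp}) yields $x_{\bzeta^{(t)},\bseta_0^{(t)}}\cdot y_{e_i} = (1+\zeta_i)\,x_{\bzeta^{(t)},\bseta_0^{(t)}}$ in degree $2$. Summing over $i$ and using $\sum_i\zeta_i\equiv\chi_T(t)\pmod 2$ (since $\bzeta\in Z_t$) produces $x_{\bzeta^{(t)},\bseta_0^{(t)}}\cdot S_t = 0$ as a degree-$2$ combination.

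Summing this identity over $\bzeta\in Z_t$ gives $\big(\sum_{\bzeta\in Z_t}x_{\bzeta^{(t)},\bseta_0^{(t)}}\big)\cdot S_t = 0$. To isolate $S_t$, I would separately derive $\big(\sum_{\bzeta\in Z_t}x_{\bzeta^{(t)},\bseta_0^{(t)}}-1\big)\cdot S_t = 0$ by starting from \eqref{eq:cont1} at $\bseta_0^{(t)}$ multiplied by $S_t$ (valid in degree $2$ as both factors are degree $1$), and verifying that every colour-mismatched cross-term $x_{v',\bseta_0^{(t)}}\cdot S_t$ is itself a degree-$2$ axiom consequence: $x_{v',\bseta_0^{(t)}}\cdot y_{e_i}$ is directly the Q-axiom $x_{v',\bseta_0^{(t)}}\cdot x_{0^{(e_i)},0^{(e_i)}}=0$ of \eqref{eq:comp} (the singleton $(v',\bseta_0^{(t)})$ is colour-mismatched, hence no local isomorphism), while $x_{v',\bseta_0^{(t)}}\cdot c$ for a scalar $c$ is just $c$ times the degree-$2$ derivation of $x_{v',\bseta_0^{(t)}} = 0$ (obtained by combining $x_{v',\bseta_0^{(t)}}^2 = 0$ with $x_{v',\bseta_0^{(t)}}^2 - x_{v',\bseta_0^{(t)}} = 0$). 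Subtracting the two identities isolates $S_t = 0$, and summing over $t\in V(H)$ completes the refutation.

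The main obstacle will be keeping the refutation strictly within degree $2$ at this elimination step: the naive approach of multiplying the degree-$2$ derivation of $\sum_\bzeta x_{\bzeta^{(t)},\bseta_0^{(t)}} = 1$ by the degree-$1$ polynomial $S_t$ would push the total degree to three. The workaround relies on the structural observation that $S_t$ is linear in the single-variable monomials $y_e = x_{0^{(e)},0^{(e)}}$, so every cross-term arising in the elimination is killed by a \emph{single} application of \eqref{eq:comp} (or of $x^2=0$ together with $x^2-x=0$) rather than by rescaling a previously obtained degree-$2$ certificate. This is also precisely where the argument becomes specific to $\FF_2$: the parity collapse $|T|+2|E(H)|\equiv 1$ fails over any field of characteristic $\neq 2$, consistent with the linear lower bound of Theorem~\ref{thm:main}.
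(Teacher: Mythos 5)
Your proof is correct and follows essentially the same route as the paper's: for each vertex $t$ of $H$ you derive the corresponding Tseitin parity equation in the edge-variables $y_e=x_{0^{(e)},0^{(e)}}$ as a degree-$2$ combination of the axioms of $\Piso(G,\tilde G)$, and then sum over $t\in V(H)$ so that each $y_e$ cancels modulo $2$ and only the constant $|T|+2|E(H)|\equiv 1\pmod 2$ survives. You treat arbitrary vertex degrees (the paper restricts to $3$-regular $H$ for readability) and you are more explicit than the paper about why eliminating the gadget-vertex variables keeps the Nullstellensatz combination in degree $2$; both of these points check out.
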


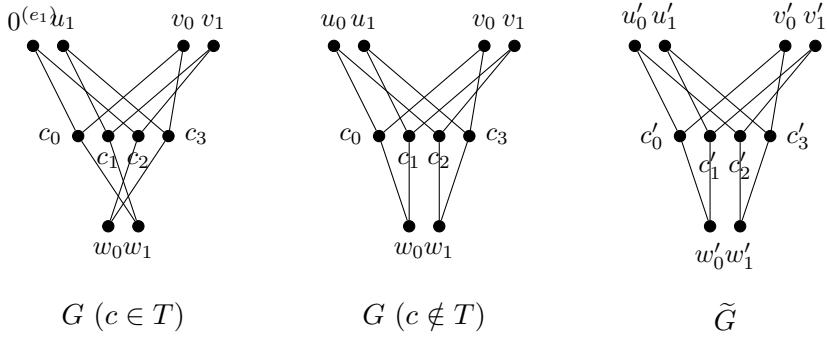
\begin{figure}
  
\begin{tikzpicture}
[knoten/.style={circle,draw=black,fill=black,
  inner  sep=0pt,minimum  size=1.5mm},scale=2]

 \node at (11,-0.2) {$G$ ($c\in T$)};
\node[knoten] (andO1) at (10.5-0.1,1.6) [label=above:{\small$0^{(e_1)}$}] {};
\node[knoten] (andI1) at (10.5+0.1,1.6) [label=above:{\small$u_1$}]{};
\node[knoten] (andO2) at (11.5-0.1,1.6) [label=above:{\small$v_0$}]{};
\node[knoten] (andI2) at (11.5+0.1,1.6) [label=above:{\small$v_1$}]{};
\node[knoten] (andI3) at (11-0.1,0.4) [label=below:{\small$w_0$}]{};
\node[knoten] (andO3) at (11+0.1,0.4) [label=below:{\small$w_1$}]{};
\node[knoten] (CFI1) at (11-0.3,1) [label=left:{\small$c_0$}] {};
\node[knoten] (CFI2) at (11-0.1,1) [label=below:{\small$c_1$}]{};
\node[knoten] (CFI3) at (11+0.1,1) [label=below:{\small$c_2$}]{};
\node[knoten] (CFI4) at (11+0.3,1) [label=right:{\small$c_3$}]{};
\draw[-] (andO3) -- (CFI1);
\draw[-] (andO3) -- (CFI2);
\draw[-] (andI3) -- (CFI3);
\draw[-] (andI3) -- (CFI4);
\draw[-] (andO1) -- (CFI1);
\draw[-] (andO1) -- (CFI3);
\draw[-] (andI1) -- (CFI2);
\draw[-] (andI1) -- (CFI4);
\draw[-] (andI2) -- (CFI2);
\draw[-] (andI2) -- (CFI3);
\draw[-] (andO2) -- (CFI1);
\draw[-] (andO2) -- (CFI4);

\begin{scope}[xshift=2cm]
 \node at (11,-0.2) {$G$ ($c\notin T$)};
\node[knoten] (andO1) at (10.5-0.1,1.6) [label=above:{\small$u_0$}] {};
\node[knoten] (andI1) at (10.5+0.1,1.6) [label=above:{\small$u_1$}]{};
\node[knoten] (andO2) at (11.5-0.1,1.6) [label=above:{\small$v_0$}]{};
\node[knoten] (andI2) at (11.5+0.1,1.6) [label=above:{\small$v_1$}]{};
\node[knoten] (andO3) at (11-0.1,0.4) [label=below:{\small$w_0$}]{};
\node[knoten] (andI3) at (11+0.1,0.4) [label=below:{\small$w_1$}]{};
\node[knoten] (CFI1) at (11-0.3,1) [label=left:{\small$c_0$}] {};
\node[knoten] (CFI2) at (11-0.1,1) [label=below:{\small$c_1$}]{};
\node[knoten] (CFI3) at (11+0.1,1) [label=below:{\small$c_2$}]{};
\node[knoten] (CFI4) at (11+0.3,1) [label=right:{\small$c_3$}]{};
\draw[-] (andO3) -- (CFI1);
\draw[-] (andO3) -- (CFI2);
\draw[-] (andI3) -- (CFI3);
\draw[-] (andI3) -- (CFI4);
\draw[-] (andO1) -- (CFI1);
\draw[-] (andO1) -- (CFI3);
\draw[-] (andI1) -- (CFI2);
\draw[-] (andI1) -- (CFI4);
\draw[-] (andI2) -- (CFI2);
\draw[-] (andI2) -- (CFI3);
\draw[-] (andO2) -- (CFI1);
\draw[-] (andO2) -- (CFI4);
\end{scope}

\begin{scope}[xshift=4cm]
 \node at (11,-0.2) {$\tilde G$};
\node[knoten] (andO1) at (10.5-0.1,1.6) [label=above:{\small$u'_0$}] {};
\node[knoten] (andI1) at (10.5+0.1,1.6) [label=above:{\small$u'_1$}]{};
\node[knoten] (andO2) at (11.5-0.1,1.6) [label=above:{\small$v'_0$}]{};
\node[knoten] (andI2) at (11.5+0.1,1.6) [label=above:{\small$v'_1$}]{};
\node[knoten] (andO3) at (11-0.1,0.4) [label=below:{\small$w'_0$}]{};
\node[knoten] (andI3) at (11+0.1,0.4) [label=below:{\small$w'_1$}]{};
\node[knoten] (CFI1) at (11-0.3,1) [label=left:{\small$c'_0$}] {};
\node[knoten] (CFI2) at (11-0.1,1) [label=below:{\small$c'_1$}]{};
\node[knoten] (CFI3) at (11+0.1,1) [label=below:{\small$c'_2$}]{};
\node[knoten] (CFI4) at (11+0.3,1) [label=right:{\small$c'_3$}]{};
\draw[-] (andO3) -- (CFI1);
\draw[-] (andO3) -- (CFI2);
\draw[-] (andI3) -- (CFI3);
\draw[-] (andI3) -- (CFI4);
\draw[-] (andO1) -- (CFI1);
\draw[-] (andO1) -- (CFI3);
\draw[-] (andI1) -- (CFI2);
\draw[-] (andI1) -- (CFI4);
\draw[-] (andI2) -- (CFI2);
\draw[-] (andI2) -- (CFI3);
\draw[-] (andO2) -- (CFI1);
\draw[-] (andO2) -- (CFI4);
\end{scope}
\end{tikzpicture}
\caption{Cai-Fürer-Immerman Gadets for vertices of degree 3.}
\label{pic:CFI}
\end{figure}

\begin{proof}
  For simplicity we assume that $H$ is 3-regular. Our argument extends
  to graphs $H$ of arbitrary degree. 
  Recall that $G$ and $\tilde G$ contain a uniquely coloured pair of vertices $v_0,v_1$ for every edge $e\in E(H)$.
  Furthermore, for every vertex $t\in V(H)$ incident with edges
  $e_1,e_2e_3$, there are 4 vertices $c^{(t)}_0$, \ldots, $c^{(t)}_3$
  (again of unique colour)
 that are connected to the vertex-pairs
  $0^{(e_i)},0^{(e_i)}$, as shown in Figure \ref{pic:CFI}.
  Now we derive a an unsatisfiable system of linear equations over $\mathbb Z_2$ with a polynomial calculus refutation of rank 1 and degree 2.
  Hence, there is a degree 2 Nullstellensatz refutation, as satisfiability of this system can be refuted using Gaussian elimination over $\mathbb Z_2$.
  For every gadget as depicted in Figure \ref{pic:CFI} we derive 
  \begin{align}
    \label{eq:cfi1}
    x_{u_0u_0'}+x_{v_0v_0'}+x_{w_0w_0'}&=1 &&\text{if $c\in T$,} \\
    x_{u_0u_0'}+x_{v_0v_0'}+x_{w_0w_0'}&=0 &&\text{if $c\notin T$.}
  \end{align}
  By the definition of $\TS(H,T)$, this system is unsatisfiable if $T$ is odd.
  We only prove the former case, the latter is symmetric since (as $x_{w_0w_0'} + x_{w_1w_0'}=1$) it is equivalent to $x_{u_0u_0'}+x_{v_0v_0'}+x_{w_1w_0'}=1$.
  By employing \eqref{eq:comp} for coloured graphs, we directly
  eliminate variables $x_{uv}$, where $u$ and $v$ have different
  colours. The derivation is shown in Figure~\ref{fig:derivation}.
\end{proof}

  \begin{figure}
  \begin{align*}
    1) &&  x_{c_0c_0'} + x_{c_0c_1'} + x_{c_0c_2'} + x_{c_0c_3'} &= 1 &&\text{axiom \eqref{eq:cont2}} \\
    2) &&  x_{w_0w_0'}x_{c_0c_0'} + x_{w_0w_0'}x_{c_0c_1'}\qquad& \\&& + x_{w_0w_0'}x_{c_0c_2'} + x_{w_0w_0'}x_{c_0c_3'} &= x_{w_0w_0'} &&\text{mult. with }x_{w_0w_0'} \\
    3) &&   x_{w_0w_0'}x_{c_0c_2'} &= 0 &&\text{axiom} \\
    4) &&   x_{w_0w_0'}x_{c_0c_3'} &= 0 &&\text{axiom} \\
    5) &&  x_{w_0w_0'}x_{c_0c_0'} + x_{w_0w_0'}x_{c_0c_1'}  &= x_{w_0w_0'} &&(2) - (3) - (4) \\
    6) &&   x_{w_0w_0'}+x_{w_0w_1'} &= 1 &&\text{axiom \eqref{eq:cont2}}\\
    7) &&   x_{c_0c_0'}x_{w_0w_0'}+x_{c_0c_0'}x_{w_0w_1'} &= x_{c_0c_0'} &&\text{mult. with }x_{c_0c_0'} \\
    8) &&   x_{c_0c_0'}x_{w_0w_1'} &= 0 &&\text{axiom} \\
    9) &&   x_{c_0c_0'}x_{w_0w_0'} &= x_{c_0c_0'} &&(7)-(8) \\
    10) &&   x_{c_0c_1'}x_{w_0w_0'}+x_{c_0c_1'}x_{w_0w_1'} &= x_{c_0c_1'} &&\text{mult. (6) with }x_{c_0c_1'} \\
    11) &&   x_{c_0c_1'}x_{w_0w_1'} &= 0 &&\text{axiom} \\
    12) &&   x_{c_0c_1'}x_{w_0w_0'} &= x_{c_0c_1'} &&(10)-(11) \\
    13) &&  x_{c_0c_0'} + x_{c_0c_1'}  &= x_{w_0w_0'} &&(5) - (9) - (12) \\
    14) &&  x_{c_0c_0'} + x_{c_0c_2'}  &= x_{u_0u_0'} &&\text{analogous to (1)\ldots(13) with $u$} \\
    15) &&  x_{c_0c_0'} + x_{c_0c_3'}  &= x_{v_0v_0'} &&\text{analogous to (1)\ldots(13) with $v$} \\
    16) &&  x_{u_0u_0'}+x_{v_0v_0'}+x_{w_0w_0'}  &= 2x_{c_0c_0'}+1 &&(1)-(13)-(14)-(15) \\
    16) &&  x_{u_0u_0'}+x_{v_0v_0'}+x_{w_0w_0'}  &= 1 &&\text{over $\mathbb Z_2$} \qed\\
  \end{align*}
  \caption{Derivation of \eqref{eq:cfi1}}\label{fig:derivation}
  \end{figure}

  Thus, to prove lower bounds for algebraic proof systems over $\FF_2$
  we need new techniques. Our final theorem, which even derives lower
  bound over $\mathbb Z$, is a first step.

\begin{theorem}\label{theo:int}
There non-isomorphic graphs $G$, $H$ such that
$\operatorname{MLIN}(\GIeq{G,H}^2)$ has a solution over $\mathbb
Z$. 
\end{theorem}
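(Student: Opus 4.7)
The plan is to exhibit specific non-isomorphic graphs $G,H$ and construct an explicit $\mathbb{Z}$-valued solution $\alpha$ of $\operatorname{MLIN}(\GIeq{G,H}^2)$. A natural candidate pair is one that is already indistinguishable by $2$-WL but non-isomorphic—e.g.\ the Shrikhande graph and the $4\times 4$ rook's graph $K_4\,\square\,K_4$ (both strongly regular with parameters $(16,6,2,2)$), or a pair obtained from the $\Gamma$-CSP construction of Section~5 for a suitable group $\Gamma$, tuned so that the resulting linear system admits no degree-$2$ $\mathbb{Z}$-linear certificate of non-isomorphism.

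For the assignment I take a $\mathbb{Z}$-linear combination of bijections. Pick bijections $\sigma_1,\ldots,\sigma_m\colon V(G)\to V(H)$ and integers $c_1,\ldots,c_m$ with $\sum_i c_i = 1$, and set
\begin{align*}
\alpha(X_{\{(v,w)\}}) &= \sum_{i} c_i\,[\sigma_i(v)=w], \\
\alpha(X_{\{(v,w),(v',w')\}}) &= \sum_{i} c_i\,[\sigma_i(v)=w,\, \sigma_i(v')=w'].
\end{align*}
Because each $\sigma_i$ is a bijection, the equations linearising \eqref{eq:cont1} and \eqref{eq:cont2}, as well as their single-variable multiplications $\sum_{v} X_{\{(v,w),(v',w')\}}=X_{\{(v',w')\}}$ (and symmetrically in $w$), are automatically satisfied. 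The only non-trivial conditions are those from \eqref{eq:comp}: $\alpha(X_{\{(v,w),(v',w')\}})=0$ for every pair $\{(v,w),(v',w')\}$ that is not a local isomorphism.

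This reduces the theorem to a finite homogeneous system of $\mathbb{Z}$-linear equations on the $c_i$, together with the affine equation $\sum_i c_i = 1$: for every "bad" quadruple $(v,v',w,w')$ (where $vv'\in E(G)\not\Leftrightarrow ww'\in E(H)$) the sum of $c_i$ over those $\sigma_i$ with $\sigma_i(v)=w$ and $\sigma_i(v')=w'$ must vanish. My approach is to choose the family $\{\sigma_i\}$ with substantial built-in symmetry—for instance, as an orbit of $\operatorname{Aut}(G)\times\operatorname{Aut}(H)$ acting on bijections, or as a $\mathbb{Z}[\Gamma]$-orbit in the $\Gamma$-CSP case—so that the bad-pair equations collapse by symmetry into only a small number of independent constraints, which can then be solved explicitly in $\mathbb{Z}$.

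The principal difficulty is that integer feasibility is strictly stronger than rational feasibility: Lemma~\ref{lem:from_CkEquiv_to_solution} already provides a $\mathbb{Q}$-solution with denominators $1/t(\bar v)$, but these cannot be cleared by scaling because the normalisation $\sum_v X_{\{(v,w)\}}=1$ rigidly fixes the scale. The crux of the proof is therefore an explicit integer-cancellation argument tailored to the chosen pair $(G,H)$, showing that the "bad" coefficients can be forced to zero purely by $\mathbb{Z}$-shifts among the $c_i$; it is the search for graphs where such cancellation succeeds (and where no $\mathbb{Z}$-linear obstruction arises at degree~$2$) that dictates the specific construction.
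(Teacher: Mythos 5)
Your reduction of the theorem to a homogeneous integer system on the coefficients $c_i$ is sound as far as it goes: with $\alpha=\sum_i c_i\,\chi_{\sigma_i}$ (where $\chi_{\sigma_i}$ denotes the $0/1$ moment vector of the bijection $\sigma_i$) and $\sum_i c_i=1$, the linearisations of \eqref{eq:cont1}, \eqref{eq:cont2} and their degree-$2$ liftings are indeed automatic, and only the conditions $\alpha(X_{\{(v,w),(v',w')\}})=0$ for non-local-isomorphisms remain. But the argument stops exactly where the theorem begins: you never exhibit a concrete pair $(G,H)$, a family $\{\sigma_i\}$, and integers $c_i$ for which the cancellations hold, and you say yourself that finding a pair where ``such cancellation succeeds'' is the crux. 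Moreover, one of your two candidate families provably cannot work: for the CFI graphs, i.e.\ $G=G(\TS(H,T))$ and $\tilde G=\tilde G(\TS(H,T))$ with $|T|$ odd, Theorem~\ref{thm:F2_solves_CFI} gives a degree-$2$ Nullstellensatz refutation of $\Piso(G,\tilde G)$ over $\FF_2$, so by Lemma~\ref{lem:NullPCLin} the system $\operatorname{MLIN}(\Piso(G,\tilde G)^2)$ has no solution over $\FF_2$ and hence none over $\mathbb Z$ (reduce mod $2$). Whether the Shrikhande/rook's-graph pair admits an integer solution is likewise left unverified; $2$-WL-equivalence only supplies the rational solution of Lemma~\ref{lem:from_CkEquiv_to_solution}, which, as you note, cannot be rescaled.

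There is also reason to doubt the ansatz itself. The paper's proof (Lemma~\ref{lem:coprime_colouring}, applied to $G=$ three disjoint triangles and $H=$ a triangle plus a $6$-cycle) uses building blocks that are \emph{not} integer combinations of bijections: for a suitable colouring of index $c$ it assigns mass $c/\ell$ uniformly within each colour class and defines the degree-$2$ values locally, one pair of colour classes at a time, via the matchings between the classes. In the example these local choices cannot be glued into globally consistent bijections --- the monodromy of the matchings around the $6$-cycle differs from that around a triangle, which is precisely why $G\not\cong H$ --- so the resulting assignment lies outside the affine span of permutation indicators that your ansatz is confined to. The price paid is that each such assignment satisfies the normalisation equations with right-hand side $c$ rather than $1$; the paper then takes two suitable colourings of coprime indices $3$ and $2$ and combines the corresponding assignments via B\'ezout, $1=sb+tc$. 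Decoupling the normalisation from the structure of the building blocks in this way is the idea your proposal is missing, and it is what allows the construction to succeed without ever solving the cancellation problem you pose for global bijections.
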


\begin{corollary}
  There non-isomorphic graphs $G$, $H$ such that $\GIeq{G,H}$ has no degree-2 Nullstellensatz refutation over $\mathbb F_q$ for any prime $q$.
\end{corollary}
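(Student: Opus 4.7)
The plan is to exhibit an explicit pair of non-isomorphic graphs $G$, $H$ together with an integer assignment $\alpha$ to the variables $X_\pi$ with $|\pi|\le 2$ of the system $\operatorname{MLIN}(\GIeq{G,H}^2)$, and to verify directly that $\alpha$ satisfies each linear equation of the system.

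To control the search, I would first restrict to pairs $(G,H)$ that are already equivalent under the $2$-dimensional Weisfeiler-Lehman algorithm. By Lemma~\ref{lem:from_CkEquiv_to_solution}, any such pair admits a rational solution with the additional downward-closure property, so the only remaining question is whether the affine rational solution space contains an integer point. Natural candidates are pairs of non-isomorphic strongly regular graphs with the same parameters, for example the Shrikhande graph versus the $4\times 4$ rook graph $L_2(4)$, both of which are $(16,6,2,2)$-strongly regular.

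Because both graphs are vertex- and edge-transitive, I would use an $\operatorname{Aut}(G)\times \operatorname{Aut}(H)$-symmetric ansatz: let $\alpha(X_{\{(v,w)\}})$ take a single value $x$, and let $\alpha(X_{\{(v,w),(v',w')\}})$ take one value per orbit of the pair $(\{v,v'\},\{w,w'\})$ under the combined group action, with the value forced to $0$ on orbits that are not local isomorphisms. Substituting this ansatz into $\operatorname{MLIN}(\GIeq{G,H}^2)$ collapses the whole system into a small finite $\mathbb Z$-linear system in the few ansatz parameters, whose coefficients are orbit sizes and adjacency counts. One then tries to solve this reduced system over $\mathbb Z$. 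The hope is that because $G$ typically has fewer pair-orbits than $H$ (which is precisely how non-isomorphic graphs can be $2$-WL-equivalent), several $H$-orbits collapse onto the same $G$-orbit, producing degrees of freedom that the canonical rational solution does not use and that can be exploited to clear denominators while keeping the inhomogeneous row- and column-sum constraints $\sum_v X_{\{(v,w)\}}=1$ intact.

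The main obstacle I anticipate is not the routine verification of the equations themselves, but the \emph{choice} of a pair $(G,H)$ for which the reduced system is actually integer-solvable. Simple candidates like $(2K_3,C_6)$ have a unique rational solution (namely $x=\tfrac{1}{6}$, with adjacent-adjacent pair value $\tfrac{1}{12}$ and non-adjacent-non-adjacent pair value $\tfrac{1}{18}$) with no room to adjust into $\mathbb Z$. If the SRG candidates turn out equally rigid, I would enlarge the class, e.g.\ by taking disjoint unions or graph products of a base pair, by introducing vertex colours that refine the orbit structure, or by abandoning full symmetry and allowing $\alpha$ to have signed integer entries that cancel in each row and column. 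Once such an $\alpha$ has been produced, the corollary follows immediately: reducing $\alpha$ modulo any prime $q$ yields a solution of $\operatorname{MLIN}(\GIeq{G,H}^2)$ over $\mathbb F_q$, and by Lemma~\ref{lem:NullPCLin} this excludes any degree-$2$ Nullstellensatz refutation of $\GIeq{G,H}$ over~$\mathbb F_q$.
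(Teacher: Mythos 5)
Your outer reduction is exactly the paper's: an integer solution of $\operatorname{MLIN}(\GIeq{G,H}^2)$ reduces modulo any prime $q$ to a solution over $\mathbb F_q$, and Lemma~\ref{lem:NullPCLin} then rules out degree-$2$ Nullstellensatz refutations. The gap is that the heart of the argument --- actually producing a non-isomorphic pair $(G,H)$ together with an integer solution --- is not carried out; you describe a search strategy and then concede that you do not know a pair for which it succeeds. Your own computation for $(2K_3,C_6)$ already shows why the symmetric ansatz is doomed for any vertex-transitive pair on $n$ vertices: the row-sum equation forces the common value of $\alpha(X_{\{(v,w)\}})$ to be $1/n$, so the Shrikhande-versus-rook candidate fails for the same reason, and none of your listed fallbacks (disjoint unions, products, colours, signed entries) is developed into a construction. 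As written, the proposal identifies the obstacle but does not contain the idea that overcomes it, so it does not prove the corollary.

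The missing idea in the paper (Theorem~\ref{theo:int} via Lemma~\ref{lem:coprime_colouring}) is a B\'ezout trick: one does not look for a single assignment satisfying the system, but for two integer assignments $\alpha,\beta$, each built from a ``suitable colouring'', which satisfy all homogeneous equations exactly and satisfy the inhomogeneous row- and column-sum equations with right-hand side equal to the index $c$ (respectively $b$) of the colouring instead of $1$. If $\gcd(b,c)=1$, the combination $s\beta+t\alpha$ with $sb+tc=1$ is a genuine integer solution. The explicit pair is $G=$ three disjoint triangles and $H=$ a triangle plus a $6$-cycle, with one suitable colouring of index $3$ (colour classes of size $3$) and one of index $2$ (recolouring one triangle in each graph with singleton-free classes of size $2$ and $1$... more precisely, classes whose distinct sizes multiply to $2$). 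This linear-combination step is precisely the ``room to adjust into $\mathbb Z$'' that you observed was absent from any single scaled solution, and it is what your proposal lacks.
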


  For two graphs $G$ and $H$, consider a colouring of the vertices of both graphs. 
  We call such a colouring \emph{suitable}, if 
\begin{enumerate}
  \item for every colour, the number of vertices of this colour is the same in both graphs,
  \item the colour classes form independent sets in both graphs, and
  \item the induced subgraph on two distinct colour classes has either no edge or is a matching between these two colour classes (and the shape is the same in both graphs).
\end{enumerate}
  The \emph{index} of a colouring be the product of all different colour class sizes.

\begin{lemma}\label{lem:coprime_colouring}
  Given $G$ and $H$. If there are two suitable colourings with co-prime index, then  $\operatorname{MLIN}(\GIeq{G,H}^2)$ has a solution over $\mathbb Z$.
\end{lemma}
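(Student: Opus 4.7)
The plan is to proceed in two stages. First, for each suitable colouring of index $I$, I construct a rational solution $\alpha$ of $\operatorname{MLIN}(\GIeq{G,H}^2)$ satisfying $I\alpha\in\mathbb Z^N$. Then I combine the two solutions coming from the two colourings, using Bezout's identity to clear denominators.

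The combining stage is straightforward linear algebra. Writing the linear system underlying $\operatorname{MLIN}(\GIeq{G,H}^2)$ as $Ax=b$ with integer $A$ and integer $b$, suppose $\alpha_1,\alpha_2\in\mathbb Q^N$ are solutions with $I_j\alpha_j\in\mathbb Z^N$ for $j=1,2$. Since $\gcd(I_1,I_2)=1$, Bezout's identity gives integers $u,v$ with $uI_1+vI_2=1$. Define $\beta:=u(I_1\alpha_1)+v(I_2\alpha_2)$. Then $A\beta=uI_1b+vI_2b=(uI_1+vI_2)b=b$, so $\beta$ solves the system; and $\beta$ is integer-valued since both $u(I_1\alpha_1)$ and $v(I_2\alpha_2)$ are.

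The heart of the proof is constructing $\alpha$ from a suitable colouring with index $I$. I would set $\alpha(X_\emptyset):=1$ and $\alpha(X_{\{(v,w)\}}):=1/s_c$ whenever $v,w$ share a colour $c$ of class size $s_c$, and $0$ otherwise; each such $s_c$ is one of the distinct class sizes whose product defines $I$, so these singleton values already lie in $\tfrac{1}{I}\mathbb Z$. Extending to pair variables $X_{\{(v,w),(v',w')\}}$ requires a case analysis on the colours of $v,v',w,w'$. The suitability hypothesis---colour classes are independent in both $G$ and $H$, and any two distinct classes induce either no edges or a matching of the same shape---means the pair constraints decouple block by block across pairs of colour classes. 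Within each local block, I would choose weights that (i)~vanish on non-local-isomorphism pairs, satisfying the multilinearisation of \eqref{eq:comp}; (ii)~satisfy the row- and column-sum equations obtained by multiplying \eqref{eq:cont1} and \eqref{eq:cont2} by a single variable and multilinearising; and (iii)~lie in $\tfrac{1}{I}\mathbb Z$.

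The principal obstacle is this case analysis. The tempting candidate---assigning $\alpha(X_\pi)$ the probability that a uniformly random colour-preserving bijection extends $\pi$---fails: its denominators can be as large as $\prod_c s_c!$, and when two colour classes are joined by matchings it generally assigns positive weight to configurations where a matched $G$-pair is sent to an unmatched $H$-pair, violating \eqref{eq:comp}. One must instead exploit the slack in the affine solution space: for a pair of colour classes joined by matchings, weights should be concentrated on matching-respecting slots, with denominator equal to the class size rather than its square; for pairs of colour classes with no edges between them, weights may be spread more uniformly, with denominators aligned to the distinct-class-size factors in $I$. Verifying that these local choices glue to a consistent global $\alpha$ with $I\alpha\in\mathbb Z^N$ is the technical crux, after which the Bezout step completes the argument.
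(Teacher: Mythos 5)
Your proposal follows essentially the same route as the paper's proof: a per-colouring assignment whose singleton values are $1/\ell$ on matching colour classes and whose pair values are concentrated on the matching-respecting ``diagonal'' slots (denominator $\ell$) or spread uniformly over unconnected class pairs (denominator $\ell m$), followed by a Bezout combination of the two assignments scaled by their coprime indices. The paper phrases this with the unnormalised integer assignment $c\alpha$, which satisfies the lifted equations exactly and the base equations with right-hand side $c$ instead of $1$, but that is the same computation as your normalised version, and your observation that the uniform-random-bijection marginals would violate the compatibility constraints \eqref{eq:comp} is precisely why the paper's construction takes the form it does.
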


\begin{proof}
  For a colouring of index $c$, we define an integer assignment $\alpha$ to the variables of $\operatorname{MLIN}(\GIeq{G,H}^2)$ as follows: 
  \begin{align}
    \alpha(X_{vw}) = 
    \begin{cases}
      0 \text{, if $v$ and $w$ have different colours,} \\
      c/\ell \text{, if $\ell$ is the size of the colour class of $v$ and $w$.} 
    \end{cases}
  \end{align}
  For the variables $X_{vw,v'w'}$ we set $\alpha(X_{vw,v'w'})=0$, if the colours of $v$ and $w$ or the colours of $v'$ and $w'$ differ. 
  Otherwise, let $C_1$ be the colour class of $v$ and $w$ and $C_2$ be the colour class of $v'$ and $w'$.
  For a colour class $C$, denote by $C^G$ and $C^H$ the vertices of colour $C$ in $G$ and $H$, respectively.
  Now we fix two colour classes $C_1$ and $C_2$. 
  Let $C_1^G = \{v_1,\ldots,v_{\ell}\}$, $C_2^G = \{v'_1,\ldots,v'_{m}\}$, $C_1^H = \{w_1,\ldots,w_{\ell}\}$ and $C_2^H = \{w'_1,\ldots,w'_{m}\}$.
  If $C_1=C_2$, it holds that $\ell=m$, $v_i=v'_i$ and $w_i=w'_i$.
  In the case that $C_1$ and $C_2$ are distinct and the edge set forms a matching, we assume w.l.o.g. that the edges are $\{v_i,v'_i\}\in E(G)$ and $\{w_i,w'_i\}\in E(H)$ for $1\leq i\leq \ell=m$. 
  In both cases let
  \begin{align}
    \alpha(X_{\{v_iw_j,v'_aw'_b\}}) = \begin{cases}
      c/\ell \text{, if $i-a=j-b$,} 
      0 \text{, otherwise.} \\
    \end{cases}
  \end{align}  
  In the remaining case that $C^1$ and $C^2$ are distinct and there are no edges between the colour classes, we let $\alpha(X_{\{vw,v'w'\}})= c/(\ell m)$.
  This mapping ensures, that $\alpha(X_{\pi})=0$, if $\pi$ is no local isomorphism.
  Furthermore, for the linearised 2-dimensional version of \eqref{eq:cont1} and \eqref{eq:cont2} we have 
  \begin{align}
    \alpha(X_{\{vw\}}) &= \sum_{v'\in V(G)} \alpha(X_{\{vw,v'w'\}}) \text{, for all $w'\in V(H)$, and} \\
    \alpha(X_{\{vw\}}) &= \sum_{w'\in V(H)} \alpha(X_{\{vw,v'w'\}}) \text{, for all $v'\in V(G)$.} 
  \end{align}
  However, the linearised version \eqref{eq:cont1} and \eqref{eq:cont2} itself we have
  \begin{align}
    \sum_{v\in V(G)} \alpha(X_{\{vw\}}) &= c \text{, for all $w\in V(H)$, and}\\
    \sum_{w\in V(H)} \alpha(X_{\{vw\}}) &= c \text{, for all $v\in V(G)$.} 
  \end{align}

  Hence, $\alpha$ is only a satisfying assignment if $c=1$ (and in this case $G$ and $H$ are isomorphic).
  However, let $\beta$ be the assignment constructed the same way out of a second suitable colouring with index $b$.
  If $b$ and $c$ are co-prime then there are integers $s$ and $t$ such that $1=sb+tc$.
  Thus, the assignment $\gamma$ defined by 
  $$
  \gamma(X_\pi) = s\beta(X_\pi) + t\alpha(X_\pi)
  $$
  satisfies all equations from $\operatorname{MLIN}(\GIeq{G,H}^2)$.
\end{proof}

\begin{tikzpicture}
  [scale=.6, knoten/.style={circle,draw=black,fill=black,
  inner  sep=0pt,minimum  size=1.5mm}]

  \node[knoten,fill=green] (1) at (1.5,2.6) {};
  \node[knoten,fill=green] (2) at (1.5,3.1) {};
  \node[knoten,fill=blue] (3) at (0,0) {};
  \node[knoten,fill=blue] (4) at (0,.5) {};
  \node[knoten,fill=red] (5) at (3,0) {};
  \node[knoten,fill=red] (6) at (3,.5) {};

  \node[knoten,fill=green] (a) at (1.5,2.6-.5) {};
  \node[knoten,fill=blue] (b) at (0,0-.5) {};
  \node[knoten,fill=red] (c) at (3,0-.5) {};
 \draw[-] (4) -- (2);
 \draw[-] (3) -- (1);

 \draw[-] (2) -- (6);
 \draw[-] (1) -- (5);

 \draw[-] (4) -- (6);
 \draw[-] (3) -- (5);

 \draw[-] (a) -- (b);
 \draw[-] (b) -- (c);
 \draw[-] (c) -- (a);

\node at (1.5,-1) {$G$};

\begin{scope}[xshift=5cm]

  \node[knoten,fill=green] (1) at (1.5,2.6) {};
  \node[knoten,fill=green] (2) at (1.5,3.1) {};
  \node[knoten,fill=blue] (3) at (0,0) {};
  \node[knoten,fill=blue] (4) at (0,.5) {};
  \node[knoten,fill=red] (5) at (3,0) {};
  \node[knoten,fill=red] (6) at (3,.5) {};

  \node[knoten,fill=green] (a) at (1.5,2.6-.5) {};
  \node[knoten,fill=blue] (b) at (0,0-.5) {};
  \node[knoten,fill=red] (c) at (3,0-.5) {};

 \draw[-] (4) -- (2);
 \draw[-] (3) -- (1);

 \draw[-] (2) -- (6);
 \draw[-] (1) -- (5);

 \draw[-] (4) -- (5);
 \draw[-] (3) -- (6);

 \draw[-] (a) -- (b);
 \draw[-] (b) -- (c);
 \draw[-] (c) -- (a);

\node at (1.5,-1) {$H$};
\end{scope}

\node at (4,-2) {suitable colouring of index 3};

\begin{scope}[xshift=11cm]

  \node[knoten,fill=green] (1) at (1.5,2.6) {};
  \node[knoten,fill=green] (2) at (1.5,3.1) {};
  \node[knoten,fill=blue] (3) at (0,0) {};
  \node[knoten,fill=blue] (4) at (0,.5) {};
  \node[knoten,fill=red] (5) at (3,0) {};
  \node[knoten,fill=red] (6) at (3,.5) {};

  \node[knoten,fill=white] (a) at (1.5,2.6-.5) {};
  \node[knoten,fill=black] (b) at (0,0-.5) {};
  \node[knoten,fill=yellow] (c) at (3,0-.5) {};
 \draw[-] (4) -- (2);
 \draw[-] (3) -- (1);

 \draw[-] (2) -- (6);
 \draw[-] (1) -- (5);

 \draw[-] (4) -- (6);
 \draw[-] (3) -- (5);

 \draw[-] (a) -- (b);
 \draw[-] (b) -- (c);
 \draw[-] (c) -- (a);
\node at (1.5,-1) {$G$};

\begin{scope}[xshift=5cm]

  \node[knoten,fill=green] (1) at (1.5,2.6) {};
  \node[knoten,fill=green] (2) at (1.5,3.1) {};
  \node[knoten,fill=blue] (3) at (0,0) {};
  \node[knoten,fill=blue] (4) at (0,.5) {};
  \node[knoten,fill=red] (5) at (3,0) {};
  \node[knoten,fill=red] (6) at (3,.5) {};

  \node[knoten,fill=white] (a) at (1.5,2.6-.5) {};
  \node[knoten,fill=black] (b) at (0,0-.5) {};
  \node[knoten,fill=yellow] (c) at (3,0-.5) {};

 \draw[-] (4) -- (2);
 \draw[-] (3) -- (1);

 \draw[-] (2) -- (6);
 \draw[-] (1) -- (5);

 \draw[-] (4) -- (5);
 \draw[-] (3) -- (6);

 \draw[-] (a) -- (b);
 \draw[-] (b) -- (c);
 \draw[-] (c) -- (a);
\node at (1.5,-1) {$H$};
\end{scope}

\node at (4,-2) {suitable colouring of index 2};
  
\end{scope}

\end{tikzpicture}

\begin{proof}[Proof of Theorem~\ref{theo:int}]
    Let $G$ be the disjoint union of three triangles and $H$ be the disjoint union of a triangle and a 6-cycle (with vertices $w_0,\ldots,w_5$ and edges $\{w_i,w_{i+1 (\text{mod } 6)}\}$).
  First, we colour the three vertices of every triangle with the three colours $C_0,C_1,C_2$. 
  In the 6-cycle, we colour $w_i$ and $w_{i+3}$ with $C_i$ (for $i=0,1,2$).
  As every colour contains exactly three elements, this is a suitable colouring of index 3. 
  We now change the colouring and assign new colours $C_3$, $C_4$, $C_5$ to the triangle in $H$ and an arbitrary triangle in $G$.
  Again, it is not hard to see, that this colouring is suitable and has index 2.
  Thus, by Lemma~\ref{lem:coprime_colouring}, $\operatorname{MLIN}(\GIeq{G,H}^2)$ has a solution over $\mathbb Z$.
\end{proof}

\section{Concluding Remarks}
Employing results and techniques from propositional proof complexity,
we prove strong lower bounds for algebraic algorithms for graph
isomorphism testing, which show that these algorithm are not much
stronger than known algorithms such as the Weisfeiler-Lehman
algorithm. 

Our results hold over all fields except---surprisingly---fields of
characteristic $2$. For fields of characteristic 2, and also for the ring of integers,
we only have very weak lower bounds. It remains an challenging open
problem to improve these.

\subsection*{Acknowledgements}
We thank Anuj Dawar and Erkal Selman for many inspiring discussions in
the initial phase of this project.

\end{document}